\newtheorem{theorem}{Theorem}
\newtheorem{lemma}{Lemma}
\newtheorem{observation}{Observation}
\newtheorem{definition}{Definition}
\newtheorem{corollary}{Corollary}
\title{Sparse Outerstring Graphs Have Logarithmic Treewidth}
\author{Shinwoo An\footnote{Pohang University of Science and Technology, Korea. Email: \href{mailto:shinwooan@postech.ac.kr}{\texttt{shinwooan@postech.ac.kr}}} 
 \hspace{50pt}    Eunjin Oh\footnote{Pohang University of Science and Technology, Korea. Email: \href{mailto:eunjin.oh@postech.ac.kr}{\texttt{{eunjin.oh@postech.ac.kr}}}}
 \hspace{50pt}    Jie Xue\footnote{New York University Shanghai, China. Email: \href{mailto:jiexue@nyu.edu}{\texttt{{jiexue@nyu.edu}}}}
 }
\newcommand{\cin}{\ensuremath{\interior(C)}}
\newcommand{\interior}{\textsf{int}}
\theoremstyle{definition}
\date{}
\begin{document}
\maketitle

\begin{abstract}
    An outerstring graph is the intersection graph of curves lying inside a disk 
    with one endpoint on the boundary of the disk. We show that 
    an outerstring graph with $n$ vertices has treewidth $O(\alpha\log n)$, where $\alpha$ denotes the arboricity of the graph, with an almost matching lower bound of $\Omega(\alpha \log (n/\alpha))$. 
    As a corollary, we show that a $t$-biclique-free outerstring graph has treewidth $O(t(\log t)\log n)$. 
    This leads to polynomial-time algorithms for most of the central NP-complete problems such as \textsc{Independent Set}, \textsc{Vertex Cover}, \textsc{Dominating Set}, \textsc{Feedback Vertex Set}, \textsc{Coloring} for sparse outerstring graphs.  
    Also, we can 
    obtain subexponential-time (exact, parameterized, and approximation) algorithms for various NP-complete problems such as 
    \textsc{Vertex Cover}, \textsc{Feedback Vertex Set} and \textsc{Cycle Packing} for (not necessarily sparse) outerstring graphs. 
\end{abstract}

\section{Introduction}
The \emph{intersection graph} of a family $\mathcal F$ of geometric objects is the graph $G=(V,E)$ such that 
every vertex of $G$ corresponds to an object of $\mathcal F$,
and two vertices of $G$ are connected  by an edge if and only if
their corresponding objects intersect. 
In this case, $\mathcal F$ is called a \emph{geometric representation} of $G$. Notice that a geometric representation of an intersection graph is not necessarily unique. 
There are several popular classes of intersection graphs such as string graphs, unit disk graphs, and disk graphs. 
In the case that $\mathcal F$ is a family of curves in the plane, 
its intersection graph is called a \emph{string graph}. 
Similarly, the intersection graph of a family of unit disks (or disk graphs)
is called a \emph{unit disk graph} (or a \emph{disk graph}). 
Notice that a unit disk graph is a disk graph, and a disk graph is a string graph\footnote{We can represent each disk as a densely spiral-shaped string that covers the interior of the disk.}. 

Geometric intersection graphs have been studied extensively as early as in the 1960s, motivated by the connection with integrated RC circuits~\cite{ehrlich1976intersection, Sinden1966TopologyOT}. 
Most NP-complete problems on general graphs remain NP-complete in geometric intersection graphs (or even in unit disk graphs).
However, recently, it is known that lots of NP-complete problems can be solved in subexponential time on geometric intersection graphs~\cite{an2023faster,an2021feedback,an2024eth,bonnet2019optimality,de2020framework,de2023clique,kisfaludi2022computing,lokshtanov2022subexponential,okrasa2020subexponential}. 
For instance, \textsc{Vertex Cover}, \textsc{3-Coloring} and \textsc{Feedback Vertex Set} can be solved in $2^{O(n^{2/3}\text{polylog }n)}$ time for string graphs with $n$ vertices~\cite{bonnet2019optimality}. 
In the case of disk graphs and unit disk graphs, one can obtain even stronger results. 
There are subexponential-time \emph{parameterized} algorithms for numerous NP-complete problems in this case. 
For instance, one can solve \textsc{Vertex Cover} and \textsc{Feedback Vertex Set} in  $2^{O(\sqrt{k})}n^{O(1)}$ time
for unit disk graphs~\cite{an2021feedback,de2020framework}, and in $2^{O(k^c)}n^{O(1)}$ time for disk graphs~\cite{an2023faster,lokshtanov2022subexponential} for some constant $c<1$,
where $k$ denotes the output size. 

All of the algorithms mentioned above use the fact that 
string graphs have treewidth sublinear in the number of vertices if they do not have a large clique (or biclique).\footnote{The definition of the treewidth is given in Section~\ref{sec:preliminary}.} 
For a string graph $G$ with $n$ vertices that does not contain a biclique of size $t$ has a balanced separator of size $O(\sqrt{t(\log t)n})$~\cite{lee2017separators}, and thus its treewidth is $O(\sqrt{t(\log t)n})$. 
For most central NP-complete problems, there are $2^{O(tw)}n^{O(1)}$-time algorithms for any graphs of treewidth $tw$. 
For subexponential-time algorithms for string graphs given in~\cite{bonnet2019optimality}, 
the authors produce several instances of \emph{sparse} string graphs using the branching technique.
Then they apply $2^{O(tw)}n^{O(1)}$-time algorithms to the instances. 
The subexponential-time parameterized algorithms for unit disk graphs and disk graphs mentioned above
use similar approaches. Using branching, they produce several instances of \emph{sparse} disk graphs. In this case, using the sparsity and the geometric properties of disk graphs, they show that the treewidth of the resulting disk graph depends only on the parameter and the measure for the sparsity of the graph. 

Motivated by these algorithmic applications, an optimal bound on the treewidth of a sparse string graph has been studied extensively for the last two decades~\cite{fox2010separator,lee2017separators,matouvsek2014near,matouvsek2014string}. 
A $t$-biclique-subgraph-free ($K_{t,t}$-free) string graph with $n$ vertices has treewidth $O_t(\sqrt n)$, while a general (not necessarily sparse) string graph with $m$ edges has treewidth $O(\sqrt m)$. 
These bounds are tight for string graphs~\cite{lee2017separators}. One might hope to obtain a better bound for a special subclass of string graphs, which might lead to faster algorithms for central NP-hard problems. However, 
this bound is tight even for sparse unit disk graphs and the intersection graph of axis-parallel segments. 
In particular, a $\sqrt n\times \sqrt n$ grid is a unit disk graph of treewidth $\Theta(\sqrt n)$. 
Also, there is an axis-parallel segment graph of treewidth $\Theta(\sqrt n)$ which does not contain a $K_{2,2}$ as a subgraph.
See Figure~\ref{fig:example}. 

\begin{figure}
	\centering
	\includegraphics[width=0.5\textwidth]{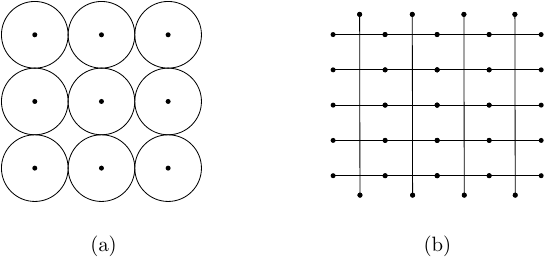}
	\caption{(a) A $\sqrt{n}\times \sqrt{n}$ grid is a unit disk graph of treewidth $\Theta(\sqrt n)$. (b) A sparse axis-parallel segment graph of treewidth $\Theta(\sqrt n)$. It does not contain $K_{2,2}$ as a subgraph. The horizontal segments form $\sqrt n$ rows, and each row consists of $\Theta(\sqrt n)$ horizontal segments. }
		\label{fig:example}
\end{figure}    

In this paper, we focus on outerstring graphs. An \emph{outerstring graph}
is the intersection graph of curves lying inside a disk with one endpoint on the boundary of the disk. 
Outerstring graphs have been studied for about 30 years since they were introduced by Kratochvíl~\cite{kratochvil1991string}. 
There are numerous works on the combinatorial properties of outerstring graphs~\cite{biedl2018size,cardinal2018intersection,fox2012coloring,fox2012string,gavenvciak2018cops,rok2019outerstring},
and efficient algorithms for outerstring graphs~\cite{bhore2022balanced,bose2022computing,keil2017algorithm}. 
Basically, \textsc{Recognition} is NP-hard for outerstring graphs~\cite{kratochvil1991string,middendorf1993weakly}, and all combinatorial properties and algorithms of~\cite{bhore2022balanced} are applicable to outerstring graphs without explicitly provided geometric representation.
On the other hand, the algorithms presented in~\cite{ bose2022computing,keil2017algorithm} run in time polynomial in the complexity of a geometric representation.
Despite these efforts, 
we are not able to find any (exact, parameterized, or approximation) polynomial-time algorithms
for central NP-hard problems specialized to outerstring graphs.

As the concept of the treewidth is a key to obtaining efficient algorithms for geometric intersection graphs, a natural direction for this problem is 
to analyze a tight bound on the treewidth of an outerstring graph. 
Fox and Pach~\cite{fox2012coloring} showed that 
an outerstring graph with $m$ edges has treewidth $O(\min\{\Delta, \sqrt m\})$, where $\Delta$ denotes the maximum degree of the graph.\footnote{Precisely, they showed that an outerstring has a balanced separator of size $O(\min\{\Delta, \sqrt m\})$. By~\cite{dvovrak2019treewidth}, this implies that the treewidth of an outerstring graph is $O(\min\{\Delta, \sqrt m\})$. 
}
Moreover, they showed that this bound is tight for outerstring graphs 
as a split graph containing a clique of size $\Theta(m)$ has treewidth $\Theta(m)$, and the Cayley graph with vertex set $\mathbb{Z}_n$ such that any two vertices of cyclic distance at most $\Delta/2$ are adjacent 
has treewidth $\Theta(\Delta)$. A split graph and such a Cayley graph are all outerstring graphs. Although these examples show that the bound of $O(\min\{\Delta, \sqrt m\})$ is tight, these are dense graphs that contain a clique of size $\Theta(\sqrt m)$. 
Indeed, what we need for algorithmic applications is a bound on the treewidth of sparse outerstring graphs, for instance, $t$-biclique-free outerstring graphs.
There is still room for the improvement of the bound in the sparse regime.


\paragraph{Our results.}
In this paper, we show that 
an outerstring graph $G$ has treewidth $O(\alpha\log n)$, where $\alpha$ is the arboricity of $G$. The arboricity of $G$ is defined as the maximum average degree of the subgraphs of $G$.
Using the previous structural results about outerstring graphs~\cite{fox2012string, lee2017separators}, we show that an outerstring graph which does not contain $K_{t,t}$ as a subgraph has arboricity $O(t\log t)$.  
Thus our main result implies that 
a $t$-biclique-free outerstring graph $G$ has treewidth $O(t(\log t)\log n)$. We emphasize that all of our algorithmic applications mentioned below work even without a geometric representation of an outerstring graph. 

First, we can obtain polynomial-time algorithms for any problem that admits a single-exponential-time algorithm parameterized by treewidth, including \textsc{Independent Set}, 
\textsc{Hamiltonian Cycle},
\textsc{Dominating Set}, and
\textsc{Feedback Vertex Set}, for $t$-biclique-free outerstring graphs for a fixed constant $t$.
All algorithms but the algorithm for \textsc{Independent set} are the first polynomial-time algorithms for these problems in $t$-biclique-free outerstring graphs. Moreover, the algorithm for \textsc{Independent set} is the first polynomial-time \emph{robust} algorithm that does not require the geometric representation of the graph\footnote{Our algorithms for the other problems are also robust. There is an FPT algorithm on \textsc{Independent Set} parameterized by the complexity of the geometric representation of an outerstring graph~\cite{keil2017algorithm}.}.

Our main result can be used for obtaining various algorithms for (possibly dense) outerstring graphs. We design subexponential-time FPT algorithms for \textsc{Vertex Cover} and \textsc{Feedback Vertex Set} on general outerstring graphs work in $2^{O(\sqrt k\log^2 k)}n^{O(1)}$ time, where $k$ is the solution size.
It is known that  \textsc{Vertex Cover} can be solved in time polynomial in 
the complexity of a geometric representation of an outerstring graph~\cite{bhore2022balanced,bose2022computing,keil2017algorithm}. 
But the complexity of a geometric representation of an outerstring graph can be exponential in the number of vertices~\cite{biedl2018size}.
For string graphs, \textsc{Vertex Cover} can be solved in $2^{O(k^{2/3})}n^{O(1)}$ time~\cite{bonnet2019optimality}. On the other hand, \textsc{Feedback Vertex Set} can be solved in $2^{O(n^{2/3})}$ time for string graphs, and no subexponential-time algorithm parameterized by the solution size was known for this problem on string graphs and outerstring graphs prior to our work.

In addition, we can obtain (non-parameterized) $2^{O(\sqrt n\log^2 n)}$-time algorithms for
\textsc{Maximum Induced Matching} and \textsc{List 3-Coloring} on (general) outerstring graphs.
These improve the algorithms for these problems running in $2^{O(n^{2/3})}$ time~\cite{bonnet2019optimality}. (But the algorithms in~\cite{bonnet2019optimality} also work for string graphs.)
Finally, we can design a 4-approximation algorithm for \textsc{Cycle Packing} works in $n^{O(\log\log n)}$ time.
Prior to this work, neither approximation nor parameterized algorithms for 
\textsc{Cycle Packing} on outerstring graphs were known.
For general graphs, \textsc{Cycle Packing} 
 admits an $O(\sqrt{\log n})$-approximation algorithm~\cite{krivelevich2007approximation}, and it is quasi-NP-hard to approximate within a factor of $O(\log^{1/2-\varepsilon} n)$ for any $\varepsilon>0$~\cite{friggstad2011approximability}.

We believe that our result can serve as a starting point for designing efficient algorithms for outerstring graphs. 
Apart from these applications, we also believe that the main result itself is interesting. 
Only a few natural classes of graphs such as sparse $O_k$-free graphs, even-hole-free graphs of bounded degree, 
even-hole-free graphs of bounded clique number, and (theta, triangle)-free graphs are known to have logarithmic treewidth~\cite{aboulker2021tree,abrishami2022induced,bonamy2023sparse,chudnovsky2022induced}.

\paragraph{Related work.}
Outerstring graphs were introduced by~\cite{kratochvil1991string}. 
The class of outerstring graphs is a broad subclass of string graphs, which includes split graphs (graphs whose vertex set can be decomposed into a clique and an independent set), incomparability graphs (graphs representing the incomparability of elements in a partially ordered set), circle graphs (intersection graphs of chords of a circle), 
and ray graphs (intersection graph of rays starting from the $x$-axis).
Moreover, several real-world problems such as PCP routing and railway dispatching can be stated in terms of outerstring graphs~\cite{flier2010vertex,kong2010optimal}.

Lots of NP-hard problems remain NP-hard on outerstring graphs.
For instance, \textsc{Minimum Clique Cover}, \textsc{Coloring}, and 
\textsc{Dominating Set}, and \textsc{Hamiltonian Cycle} 
are NP-hard even on circle graphs~\cite{damaschke1989hamiltonian,garey1980complexity,keil1993complexity}. 
Also, \textsc{Maximum Clique} is NP-hard even on ray graphs~\cite{cabello2013clique}.
\textsc{Recognition} is also NP-hard for outerstring graphs~\cite{kratochvil1991string,middendorf1993weakly}. 
On the other hand, \textsc{Independent Set} can be solved in time polynomial in the complexity of a geometric representation of an outerstring graph if its geometric representation is given~\cite{keil2017algorithm}.
However, there is an outerstring graph that does not admit geometric representation of polynomial complexity~\cite{biedl2018size}. 
It is still unknown if \textsc{Independent Set} can be solved in
time polynomial in the number of vertices in the case that geometric representation is not given. 
All missing proofs and details can be found in the Appendices.

\section{Preliminaries}\label{sec:preliminary}
A \emph{curve} is the \emph{image} of a continuous function from a unit interval to $\mathbb R^2$. 
Under this definition, note that the union of pairwise intersecting curves is also a curve. A simple closed curve $C$ partitions the plane into two disjoint regions.
We call the bounded region the \emph{interior} of $C$ and denote it by $\interior(C)$.
We say a collection $\Gamma$ of curves are \emph{grounded} on a simple closed curve $C$ if every curve is contained in the closure of $\interior(C)$, and 
one of its endpoints is contained in $C$. 
In this case, we say $C$ is a \emph{ground} of the collection of curves.
Also, for a curve of $\Gamma$, one of its endpoints lying on $C$ is called a \emph{ground point} of the curve. 
For an index set $I=\{1,2,3\ldots\}$ and a collection $\Gamma=\{\gamma_i\}_{i\in I}$ of curves grounded on $C$, we say a graph $G$ as an \emph{outerstring graph} if it is identical to the intersection graph of $\Gamma$.
In this case, 
we say $\Gamma$ is a \emph{geometric representation} of $G$.
Throughout this paper, we assume the general position assumption that 
no three curves of $\Gamma$ intersect at a single point, and 
two curves of $\Gamma$ cross at their intersection points. 
These assumptions can be achieved by slightly adjusting the curves without changing the intersection graph. 
See Figure~\ref{fig:normal-curve}.
In the following, to distinguish the curves of $\Gamma$ from other curves not necessarily in $\Gamma$, we call the curves of $\Gamma$ the \emph{strings} of $\Gamma$. 

\begin{figure}
	\centering
	\includegraphics[width=0.70\textwidth]{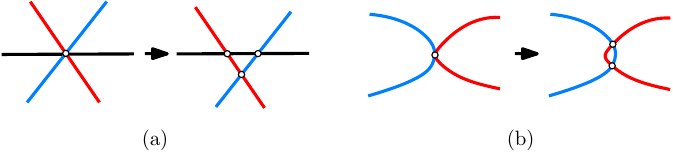}
	\caption{We can always assume the general position assumption.}
		\label{fig:normal-curve}
    \end{figure}    

An induced subgraph of an outerstring graph is also an outerstring graph as an outerstring graph is a geometric intersection graph.
Similarly, an \emph{induced minor} of an outerstring graph is also an outerstring graph although a minor of an outerstring graph is not necessarily an outerstring graph.
An induced minor of a graph $G$ is a graph that can be obtained from $G$ by deleting vertices and contracting edges. 
Whenever we deal with an induced subgraph of $G$, we assume that its underlying geometric representation is a subset of $\Gamma$. 
Similarly, the underlying geometric representation 
of an induced minor of $G$ 
is a set of curves obtained from $\Gamma$ by removing curves
and taking the union of curves. In the case that several curves are merged into a single curve, we choose the ground point of any of them as the ground point of the new curve. 

Let $G$ be the intersection graph of a collection $\Gamma$ of grounded curves. 
For a subset $\Gamma'$ of $\Gamma$, the intersection graph of $\Gamma'$ is an induced subgraph of $G$ by curves in $\Gamma'$. 
We use $G[\Gamma']$ to denote the intersection graph of $\Gamma'$.
Recall that a vertex of $G$ corresponds to a curve of $\Gamma$.
For any subset $U$ of the vertex set of $G$, 
the \emph{geometric representation} of $U$ is defined as
the union of the curves corresponding to the vertices of $U$.

\paragraph{Treewidth and brambles.}
A key notion we use in this paper is the \emph{treewidth} of a graph, which is defined as follows. 
A \emph{tree decomposition} of an undirected graph $G=(V,E)$
is defined as a pair $(T,\beta)$, where $T$ is a tree and $\beta$ is a mapping from nodes of $T$ to subsets of $V$ (we call $\beta(t)$ of $t\in T$ bag) with the following conditions. 

 \begin{itemize}\setlength\itemsep{-0.1em}
		\item  For any vertex $u\in V$, there is at least one bag which contains $u$.
		\item For any edge $(u,v)\in E$, there is at least one bag which contains both $u$ and $v$.
		\item For any vertex $u\in V$, the nodes of $T$ containing $u$ in their bags are connected in $T$.
	\end{itemize}
	
 The $\emph{width}$ of $(T,\beta)$ is defined as the size of its largest bag minus one, and the $\emph{treewidth}$ of $G$ is
	the minimum width of a tree decomposition of $G$.
 Notice that the treewidth of $K_{t,t}$ is $\Theta(t)$, and thus any graph containing $K_{t,t}$ as a subgraph has treewidth $\Omega(t)$. 

Although our main focus is to analyze a tight bound of the treewidth of an outerstring graph, we do not use this definition directly. Instead, we use an alternative characterization of treewidth using the notion of \emph{brambles}.
A \emph{bramble} $\mathcal X$ of a graph $G$ is a family of connected subgraphs of $G$ that all touch each other. Here, we say a subgraph $X$ of $G$ \emph{touches} a subgraph $X'$ of $G$ if there is a common vertex in $V(X)\cap V(X')$, or there is an edge with one endpoint in $V(X)$ and one endpoint in $V(X')$.
A subset $Y$ of $V(G)$ is a \emph{hitting set} of the bramble if $V(X)\cap V(Y)$ contains a common vertex for each subgraph $X$ of $\mathcal X$.
    Then the \emph{order} of bramble is defined as the smallest size of a hitting set of the bramble. 
    Let $\mathcal X$ be a bramble of $G$ of maximum order.
    Then it is known that the order of $\mathcal X$ is exactly the treewidth of $G$ plus one~\cite{cygan2015parameterized}.

\paragraph{Sparse outerstring graphs.} Our main result focuses on \emph{sparse} outerstring graphs. There are several different definitions of the sparsity of a graph. 
As the measure for the sparsity of a graph, we mainly use the \emph{arboricity}, which is 
defined as the minimum number of forests into which its edges can be partitioned. Equivalently, it 
is half of the maximum average degree of the subgraphs of the graph.

Other notions for measuring the sparsity of graphs are \emph{biclique-freeness} and \emph{degeneracy}. 
For an integer $t$, we say a graph is \emph{$t$-biclique-free} if
it does not contain a (not necessarily induced) subgraph isomorphic to $K_{t,t}$. 
We say $t$ as the \emph{size} of the biclique $K_{t,t}$.
We say a graph $G$ is \emph{t-degenerate} if every subgraph of $G$
has a vertex of degree at most $t$. 
In the case of outerstring graphs, all concepts mentioned above are equivalent up to log factors as we will see in Lemma~\ref{lem:sparsity}. 
Moreover, the bounds stated in the following lemma are all tight up to constant factors. 
For the first two statements, as mentioned in~\cite{fox2010separator}, 
the construction of \cite{fox2006bipartite, pach2006comment}
shows that there are $t$-biclique-free
incomparability graphs with $n$ vertices and $\Theta((t\log t)n)$ edges.
Moreover, every vertex in this graph has degree $\Omega(t\log t)$. 
It is known that an incomparability graph is an outerstring graph~\cite{fox2012coloring}, and thus
the first two bounds in the following lemma are tight.
The construction in Section~\ref{sec:lower-bound-crossing-level} gives
a tight lower bound for the last two cases.

\begin{restatable}{lemma}{sparsity}
\label{lem:sparsity}
    For an outerstring graph $G$, the following statements hold. 
    \begin{itemize}\setlength\itemsep{-0.1em}
        \item If $G$ is $t$-biclique-free,
        then $G$ is $O(t\log t)$-degenerate.
        \item If $G$ is $t$-biclique-free, $G$ 
         has arboricity $O(t\log t)$.
         \item If $G$ has arboricity $t$, then $G$ is $2t$-biclique-free.
          \item If $G$ is $t$-degenerate,
        then $G$ is $2t$-biclique-free. 
    \end{itemize}
\end{restatable}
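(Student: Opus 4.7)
The four statements split into one main technical claim underlying the first two and two elementary counting arguments for the last two. The technical claim is the following edge density bound: every $t$-biclique-free outerstring graph on $N$ vertices has $O(Nt\log t)$ edges. I would prove it by combining the structural decomposition of outerstring graphs from Fox and Pach~\cite{fox2012string} with Lee's separator theorem for $K_{t,t}$-free string graphs~\cite{lee2017separators}; the resulting bound is tight, matching the lower bound construction from \cite{fox2006bipartite, pach2006comment} already cited in the excerpt. Since being a $t$-biclique-free outerstring graph is preserved under taking induced subgraphs, this bound applies uniformly to every induced subgraph of $G$.

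Given the edge density bound, the first two statements of the lemma follow by standard counting. For degeneracy, any induced subgraph $H$ on $N' = |V(H)|$ vertices has at most $O(N' t\log t)$ edges, hence average degree $O(t\log t)$ and therefore a vertex of degree $O(t\log t)$; applying this greedily produces an $O(t\log t)$-degenerate elimination ordering of $G$. For arboricity, by the Nash-Williams formula $\alpha(G) = \max_H \lceil |E(H)|/(|V(H)|-1)\rceil$ over subgraphs $H$ of $G$ with $|V(H)| \ge 2$, bounding $|E(H)| \le |E(G[V(H)])| = O(|V(H)|\cdot t\log t)$ via the edge density bound applied to the induced subgraph $G[V(H)]$ immediately yields $\alpha(G) = O(t\log t)$.

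The last two statements are direct contradictions via the arboricity/degeneracy of $K_{2t,2t}$. For the third, suppose $G$ has arboricity $t$ but contains $K_{2t,2t}$ as a subgraph; Nash-Williams applied to this subgraph alone gives $\alpha(G) \ge \lceil 4t^2/(4t-1)\rceil = t+1$, contradicting $\alpha(G) = t$. For the fourth, a copy of $K_{2t,2t}$ inside a $t$-degenerate graph is a subgraph in which every vertex has degree $2t > t$, directly violating the definition of $t$-degeneracy. The real obstacle in the whole lemma is the linear-in-$N$ edge density bound underlying the first paragraph: a naive recursive application of the separator theorem alone gives only an $O(N^{3/2}\sqrt{t\log t})$-type bound, so additional structural input (e.g.\ an orientation argument along the ground-point order of the strings, or the reduction to incomparability graphs from Fox and Pach) is needed to reach the optimal $O(Nt\log t)$.
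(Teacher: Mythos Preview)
Your proposal is correct and matches the paper's proof essentially line for line: the paper simply cites the edge bound $|E(H)| = O(t(\log t)\,|V(H)|)$ for every $t$-biclique-free (outer)string graph $H$ as a known consequence of \cite{fox2012string} and \cite{lee2017separators}, reads off degeneracy and arboricity from the resulting average-degree bound on every induced subgraph, and handles the last two items by the same contrapositive on $K_{2t,2t}$ that you give. Your closing paragraph overthinks it---the linear edge-density bound is already a theorem in the cited literature, not something that needs to be re-derived here, so there is no ``real obstacle'' in this lemma.
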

\begin{proof}
For the first two statements, 
    assume that $G$ is $t$-biclique-free. 
    Consider an induced subgraph $H$ of $G$. 
    Note that $H$ is also a $t$-biclique-free outerstring graph. 
    By~\cite{fox2012string} and~\cite{lee2017separators},
    the number of edges of $H$ is $O(t(\log t)|V(H)|)$. Therefore, the average degree of $H$ is $O(t\log t)$, and  
    thus $H$ has a vertex of degree $O(t\log t)$. This implies that $G$ is $O(t\log t)$-degenerate, and it has arboricity $O(t\log t)$.

    Now we consider the contrapositives of the last two statements. 
    If $G$ contains a subgraph $H$ isomorphic to $K_{2t,2t}$, the average degree of $H$ is large than $t$. 
    Therefore, $G$ has arboricity greater than $t$, and thus the third statement holds.
    Moreover, it does not contain a vertex of degree at most $t$, and thus $G$ is not $t$-degenerate.
    Therefore, the lemma holds. 
\end{proof}


\section{Upper Bound on the Treewidth of an Outerstring Graph}\label{sec:upper}
In this section, we show that an outerstring graph $G$ has treewidth $O(\alpha\log n)$, where $\alpha$ denotes the arboricity of $G$. 
Let $\Gamma$ be a geometric representation of $G$, which 
is a collection of curves grounded on a ground $C$.
A key of our proof lies in defining a new notion, called the \emph{crossing-level}, which is a variant of the level in an arrangement. 
The \emph{arrangement} of $\Gamma$ is the subdivision of $\interior(C)$ 
formed by the curves of $\Gamma$ into vertices, edges, and faces. 
Note that the degree of each vertex of the arrangement is at most four by the general position assumption.
For an illustration of the crossing-level, see Figure~\ref{fig:crossing-level}(a). 

\begin{definition}
For a point $p$ in {$\textnormal{\cin}$}, the \emph{crossing-level} of $p$ in $\Gamma$
is the smallest number of different strings that one must cross to reach the ground
from $p$.\footnote{In the case that $p$ lies on a curve of $\Gamma$, 
the starting point $p$ is not considered as a crossing point.}
The \emph{maximum crossing-level} of $\Gamma$ is the maximum of the crossing-levels of all points of $\textnormal{\cin}$. 
\end{definition}

Our proof consists of three steps.
We first show that 
the maximum crossing-level of $\Gamma$ is at most  $3\alpha\log n$.  
In the second step, we show that an outerstring graph $G$ contains a clique minor of size $\Omega(tw)$, where $tw$ denotes the treewidth of $G$.
Then in the third step, using the result of the second step,
we show that 
the maximum crossing-level of $\Gamma$ is at least $\Omega(tw)$.  
By combining the two claims, we conclude that the treewidth of $G$ is $O(\alpha\log n)$.

%

\subsection{Step 1. Upper Bound on the Maximum Crossing-Level} 
\label{sec:upper-bound-crossing-level}
In this subsection, we show that the maximum crossing-level of $\Gamma$ is $O(\alpha\log n)$. 
We use the following observation which immediately comes from the definition of the crossing-level.

\begin{observation} \label{obs:arrange}
Any two points in the same face of the arrangement of $\Gamma$ have the same crossing-level. Also, any two points of the same edge of the arrangement of $\Gamma$ have the same crossing-level unless one of the points is an endpoint of the edge. 
\end{observation}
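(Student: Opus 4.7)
The plan is to prove both parts using the same basic principle: if two points in $\cin$ can be joined by a curve that does not cross any string of $\Gamma$ (or that only runs along a string already excluded from the count by the footnote in the definition), then concatenating that curve with any path to the ground shows that they have the same crossing-level.

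For the first statement, I would fix two points $p, q$ in a common face $F$ of the arrangement. Since $F$ is an open connected region whose boundary lies on strings of $\Gamma$, there is a path $\pi_{pq}\subset F$ from $p$ to $q$ that meets no string. Given any path $\pi$ from $p$ to the ground that crosses a set $S\subseteq \Gamma$ of strings, the concatenation of the reverse of $\pi_{pq}$ with $\pi$ is a path from $q$ to the ground crossing exactly the same set $S$. Hence the crossing-level of $q$ is at most that of $p$, and by symmetry the two are equal.

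For the second statement, I would take two interior points $p, q$ of an edge $e$ of the arrangement. By the general position assumption, every interior point of $e$ lies on exactly one string $s$, namely the one containing $e$. The subarc of $e$ between $p$ and $q$ lies entirely on $s$ and meets no other string of $\Gamma$. Prepending this subarc to any path from $p$ to the ground yields a path from $q$ to the ground that crosses exactly the same strings other than $s$; since $s$ is excluded from the crossing-count at both $p$ and $q$ by the footnote, $p$ and $q$ have the same crossing-level.

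The endpoint exception in the statement is genuine rather than a bookkeeping artefact: an endpoint of $e$ is a vertex of the arrangement, which by general position is the intersection of two strings of $\Gamma$ (or the free endpoint of a string), so additional strings may be excluded from the count there and the crossing-level may strictly decrease compared to an interior point. Because the whole argument amounts to transporting a path-to-ground along a curve that adds no new crossings, I do not anticipate a serious technical obstacle; the main care needed is to unpack the footnote carefully so that the string on which one starts is correctly treated as not contributing to the count.
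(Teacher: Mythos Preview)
Your argument is correct and is exactly the kind of direct verification the paper has in mind; the paper itself offers no proof, stating only that the observation ``immediately comes from the definition of the crossing-level.'' One small wording point on the edge case: the footnote does not exclude $s$ from the count outright (a path starting on $s$ may still cross $s$ elsewhere and have that crossing counted), so the cleaner phrasing is that the subarc of $e$ from $q$ to $p$ adds no crossings at all---it lies on $s$ and running along a string is not a crossing---and hence the concatenated path crosses exactly the same set of strings as the original path from $p$.
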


Let $p$ be a point of $\cin$ that achieves the maximum crossing-level of $\Gamma$, and let $r$ be the crossing-level of $p$.
For each index $0\leq i\leq r$, let $U_i$ be the set of points in $\cin$ of crossing-level at least $i$. Clearly, $U_0$ coincides with $\cin$. 
Notice that $U_i$ contains $p$ for all indices $0\leq i\leq r$. 
Let $R_i$ be the boundary of the connected
component of $U_i$ containing $p$. 
Due to Observation~\ref{obs:arrange}, 
$R_i$ is a Jordan curve consisting of parts of strings of $\Gamma$ for $i>1$, and $R_0$ coincides with the ground. 
See also Figure~\ref{fig:crossing-level}(a).
Then let $\Gamma_i$ be the set of strings of $\Gamma$ intersecting $R_i$. 
Since $R_0,R_1,\ldots R_{r}$ are concentric,
$\Gamma_r\subseteq \Gamma_{r-1}\subseteq \ldots \subseteq \Gamma_0=\Gamma$. 
In the following lemma, we show that the size of $\Gamma_i$ is (almost) geometrically decreasing. Since $|\Gamma_0|=n$, this gives an upper bound of $r$ in terms of input size $n$ and arboricity $\alpha$. 

\begin{figure}
	\centering
	\includegraphics[width=0.55\textwidth]{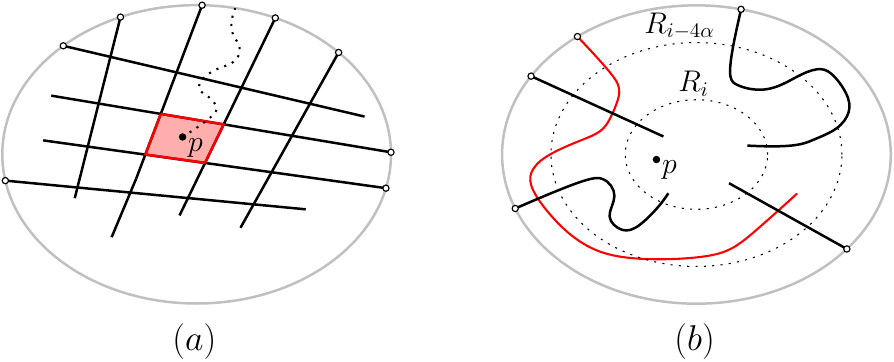}
	\caption{(a) The red region depicts $U_2$. Jordan curve $R_2$ consists of the parts of four strings. (b) 
 All black strings are contained in both $\Gamma_i$ and $\Gamma_{i-4\alpha}$,
 and the red string is contained in $\Gamma_i$ only. }
		\label{fig:crossing-level}
\end{figure}

\begin{lemma} \label{lem:small-crossing-level}
    For each index $i\geq 4\alpha$,  where $\alpha$ is the arboricity of $G$, we have $2|\Gamma_{i}|\leq |\Gamma_{i-4\alpha}|$.
\end{lemma}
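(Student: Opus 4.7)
The plan is to show that every $\gamma\in\Gamma_i$ has at least $4\alpha$ neighbors in the induced subgraph $G[\Gamma_{i-4\alpha}]$, and then conclude by a double-counting argument. Since arboricity is hereditary, $G[\Gamma_{i-4\alpha}]$ has arboricity at most $\alpha$, so it has at most $\alpha|\Gamma_{i-4\alpha}|$ edges and hence total degree sum at most $2\alpha|\Gamma_{i-4\alpha}|$. Because $\Gamma_i\subseteq\Gamma_{i-4\alpha}$, summing degrees only over vertices in $\Gamma_i$ will then yield
\[
4\alpha\,|\Gamma_i| \;\leq\; \sum_{\gamma\in\Gamma_i}\deg_{G[\Gamma_{i-4\alpha}]}(\gamma) \;\leq\; 2\alpha\,|\Gamma_{i-4\alpha}|,
\]
so $2|\Gamma_i|\leq|\Gamma_{i-4\alpha}|$, as required.

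The heart of the argument is the degree bound. A key auxiliary fact we will use is that for any curve $\eta$ in $\cin$ with endpoints $q_1,q_2$, the number of distinct strings of $\Gamma$ crossed by $\eta$ is at least the difference of their crossing-levels. This is immediate by concatenating $\eta$ with an optimal path from $q_1$ to the ground and comparing against the definition of the crossing-level at $q_2$. Applied to a curve crossing a single edge of the arrangement, this also shows that two faces of the arrangement sharing an edge differ in crossing-level by at most one; we use this to pin down that points in the interior of edges on $R_j$ have crossing-level exactly $j-1$, because such an edge separates a face inside $D_j$ (level $\geq j$) from a face outside $D_j$ (level $\leq j-1$), forcing the two adjacent levels to be exactly $j$ and $j-1$.

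Now fix $\gamma\in\Gamma_i$ and choose a point $q$ where $\gamma$ crosses $R_i$, so that $q$ has crossing-level $i-1$. Parametrize $\gamma$ from its ground point to $q$ and let $q'$ be the \emph{last} point at which $\gamma$ crosses $R_{i-4\alpha}$ before reaching $q$. Such a $q'$ exists because $\gamma$ starts on $C$ outside $\overline{D_{i-4\alpha}}$ and ends at $q\in D_{i-4\alpha}$ (the boundary case $i=4\alpha$ reduces trivially, since then $\Gamma_0=\Gamma$). By the choice of $q'$ as the \emph{last} boundary crossing, the sub-curve $\eta$ of $\gamma$ from $q'$ to $q$ stays inside $\overline{D_{i-4\alpha}}$, and its endpoints have crossing-levels $i-4\alpha-1$ and $i-1$; the auxiliary fact then gives at least $4\alpha$ distinct strings crossed by $\eta$. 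Each such crossing point lies in $\overline{D_{i-4\alpha}}$, while the crossed string starts at $C$ outside $\overline{D_{i-4\alpha}}$, so this string must meet $R_{i-4\alpha}$ and therefore belongs to $\Gamma_{i-4\alpha}$. This supplies the required $4\alpha$ neighbors of $\gamma$ in $G[\Gamma_{i-4\alpha}]$.

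The main obstacle we expect is careful geometric bookkeeping rather than any conceptual difficulty. In particular, we must verify that $q'$ is well-defined under the general position assumption, that $\eta$ genuinely never escapes $\overline{D_{i-4\alpha}}$ between $q'$ and $q$ (forced by $q'$ being the last such crossing), and that the crossing-levels of $q'$ and $q$ are exactly $i-4\alpha-1$ and $i-1$ rather than off-by-one values that would weaken the $4\alpha$ factor. Everything else, including the final double-counting step, is routine.
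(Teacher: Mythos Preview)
Your proposal is correct and follows essentially the same approach as the paper. Both arguments show that each $\gamma\in\Gamma_i$ meets at least $4\alpha$ distinct strings of $\Gamma_{i-4\alpha}$ on the portion of $\gamma$ lying between $R_{i-4\alpha}$ and $R_i$, and then finish by the same double-counting against the edge bound $|E(G[\Gamma_{i-4\alpha}])|\le \alpha\,|\Gamma_{i-4\alpha}|$. Your explicit ``auxiliary fact'' (that a curve between points of crossing-levels $a$ and $b$ must cross at least $|a-b|$ distinct strings) is exactly the mechanism the paper invokes implicitly when it says that otherwise the first point of $\gamma$ on $R_i$ would have crossing-level below $i$.

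One small caution you already anticipate: the point $q$ where $\gamma$ meets $R_i$ is typically a \emph{vertex} of the arrangement (the transversal intersection of $\gamma$ with a string contributing to $R_i$), not an interior point of an edge of $R_i$, so your ``exactly $j-1$'' claim for edge-interiors does not directly apply to $q$. The cleanest fix is to take $q$ to be a point on $\gamma$ in the interior of a face just inside $R_i$ (so $\mathrm{level}(q)\ge i$), keep $q'$ as the last crossing of $R_{i-4\alpha}$ before $q$ (so $\mathrm{level}(q')\le i-4\alpha-1$), and note that $\eta$ still lies in $\overline{D_{i-4\alpha}}$; this yields $\ge 4\alpha$ crossings with no off-by-one risk. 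The paper's own proof is similarly informal at this spot, so this is bookkeeping rather than a gap.
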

\begin{proof} 
We consider the region bounded by $R_i$ and $R_{i-4\alpha}$. See Figure~\ref{fig:crossing-level}(b). 
Note that a string of $\Gamma_i$ intersects both $R_i$ and $R_{i-4\alpha}$ by definition. 
Each string $\gamma$ of $\Gamma_{i}$ is intersected by 
at least $4\alpha$ \emph{different} strings of $\Gamma$ in the region bounded by $R_i$ and $R_{i-4\alpha}$. Otherwise, the first intersection point of $\gamma$ with $R_{i}$ 
from the ground point of $\gamma$ would have the crossing-level less than $i$, which contradicts the fact that this intersection point lies on $R_{i}$. 
Also, observe that a string of $\Gamma$ intersecting $\gamma$ in the region bounded by $R_i$ and $R_{i-4\alpha}$ is contained in $\Gamma_{i-4\alpha}$. 
Let $\mathcal P$ be the set of all pairs $(\gamma,\gamma')$ 
such that a string $\gamma$ of $\Gamma_i$ is intersected by a string $\gamma'
$ of $\Gamma_{i-4\alpha}$. 
Due to the previous observation, we have $|\mathcal P|\geq 4\alpha\cdot |\Gamma_i|$. 

Now we show that $|\mathcal P|\leq 2\alpha \cdot |\Gamma_{i-4\alpha}|$. 
To see this, observe that 
a pair $(\gamma,\gamma')$ of $\mathcal P$ corresponds to an edge of $G[\Gamma_{i-4\alpha}]$. 
This is because $\gamma\in \Gamma_i \subseteq \Gamma_{i-4\alpha}$, and 
$\gamma' \in \Gamma_{i-4\alpha}$. 
Moreover, by construction, an edge of $G[\Gamma_{i-4\alpha}]$ corresponds to
at most two different pairs of  $\mathcal P$. 
Here, notice that two strings of $\Gamma$ may intersect more than once,
but in the construction of $\mathcal P$, 
for all pairs of $\mathcal P$ whose first elements are the same, 
there second elements must be distinct. 
Therefore, an edge of $G[\Gamma_{i-4\alpha}]$ corresponds to at most two pairs of $\mathcal P$. 
Since the arboricity of $G$ is $\alpha$, any subgraph $H$ of $G$ has at most $\alpha\cdot |V(H)|$ edges.
Then the number of edges of $G[\Gamma_{i-4\alpha}]$
is at most $\alpha \cdot |\Gamma_{i-4\alpha}|$. 
Therefore, we have
$|\mathcal P|\leq 2\alpha\cdot |\Gamma_{i-4\alpha}|$, and thus
the lemma holds.
\end{proof}

\begin{lemma} \label{cor:crossing-level}
    The maximum crossing-level of $\Gamma$ is at most $O(\alpha\log n)$. 
\end{lemma}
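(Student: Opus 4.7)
The plan is to iterate Lemma~\ref{lem:small-crossing-level} to extract a geometric decay of $|\Gamma_i|$ as the index $i$ grows in steps of $4\alpha$. Starting from $|\Gamma_0|=n$, each application of the lemma advances the index by $4\alpha$ and at least halves the cardinality, so after $k$ applications I would obtain
\[
|\Gamma_{4\alpha k}| \;\le\; \frac{n}{2^k},
\]
provided $4\alpha k \le r$ so that the indices stay in the valid range.

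Next I would match this upper bound against a lower bound on $|\Gamma_r|$. The point is that $|\Gamma_r| \ge 1$ whenever $r \ge 1$: by the discussion preceding Lemma~\ref{lem:small-crossing-level}, the Jordan curve $R_r$ is composed of arcs of strings of $\Gamma$ (it is the boundary of the connected component of $U_r$ containing the deepest point $p$, and by Observation~\ref{obs:arrange} it cannot consist solely of interiors of faces), so at least one string of $\Gamma$ meets $R_r$. Writing $r = 4\alpha q + s$ with $0 \le s < 4\alpha$, iterating the lemma $q$ times downward from index $r$ would then give
\[
1 \;\le\; |\Gamma_r| \;\le\; \frac{|\Gamma_{s}|}{2^q} \;\le\; \frac{n}{2^q},
\]
so $q \le \log_2 n$ and hence $r \le 4\alpha(\log_2 n + 1) = O(\alpha \log n)$, as claimed. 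The case $r < 4\alpha$ is subsumed into the same bound.

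I do not expect any real obstacle: all of the substantive combinatorial work is already contained in Lemma~\ref{lem:small-crossing-level}, and what remains is a routine halving argument. The only mild subtlety to check is the base case $|\Gamma_r| \ge 1$, which, as indicated, is immediate from the fact that $R_r$ is a Jordan curve made of string arcs.
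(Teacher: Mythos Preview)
Your proposal is correct and follows essentially the same approach as the paper: iterate Lemma~\ref{lem:small-crossing-level} to obtain $2^{\lfloor r/(4\alpha)\rfloor}|\Gamma_r|\le |\Gamma_0|=n$, use $|\Gamma_r|\ge 1$, and conclude $r\le 4\alpha(\log_2 n+1)$. The only difference is that you spell out the justification for $|\Gamma_r|\ge 1$ via the structure of $R_r$, whereas the paper simply asserts $\Gamma_r\neq\emptyset$.
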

\begin{proof}
    For the maximum crossing-level $r$ of $\Gamma$, 
    we have $\Gamma_r\neq\emptyset$.  
    By Lemma~\ref{lem:small-crossing-level},
    $2|\Gamma_i|\leq |\Gamma_{i-4\alpha}|$ for all indices $i\geq 4\alpha$. 
    Therefore, $2^{\lfloor r/(4\alpha)\rfloor} \cdot |\Gamma_r|\leq  |\Gamma_0|= n$.
    Thus, $r\leq 4\alpha(\log n+1)$.
\end{proof}

\subsection{Step 2. Existence of a Clique Minor of Size \texorpdfstring{$tw$}{}}
\label{sec:step-2}
In this subsection, we show that an outerstring graph has a clique minor of size $\Omega(tw)$. 
A \emph{clique minor} of $G$ is a clique formed from $G$ by deleting edges, vertices, and contracting edges.
Note that a general ``string" graph may not contain a clique minor of size $\Omega(tw)$. For instance, an $n$ by $n$ rectangular grid graph is a string graph since it is a planar graph and all planar graphs are string graphs~\cite{schaefer2004decidability}. The treewidth of this graph is $\sqrt n$ but it does not have a size-five clique minor.
For any two disjoint sets $A$ and $B$ of vertices of a general graph $H$ with $|A|=|B|$,
an $(A,B)$-\emph{linkage} of $H$ 
is defined as a set of $|A|$ vertex-disjoint paths
connecting every vertex of $A$ and every vertex of $B$. See Figure~\ref{fig:clique-proof}(a). 
We say a set $Q$ of vertices of $H$ is \emph{well-linked}
if for any two disjoint subsets $A$ and $B$ of $Q$ with $|A|=|B|$, 
there is an $(A,B)$-linkage of $H$. 
The following lemma is frequently used in literature without formal proof,
but to make our paper self-contained, we add a short proof 
in Appendix~\ref{apd:sparsity}. 

\begin{restatable}[Folklore]{lemma}{folklore}\label{lem:folklore}
Any graph $H$ of treewidth $tw$ has a well-linked set of size $\Theta(tw)$. 
\end{restatable}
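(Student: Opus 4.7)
The plan is to use the bramble characterization of treewidth stated in the preliminaries: since $H$ has treewidth $tw$, there is a bramble $\mathcal{X}$ of order exactly $tw+1$. Let $Y\subseteq V(H)$ be a minimum hitting set of $\mathcal{X}$, so that $|Y|=tw+1$. I claim that $Y$ itself is well-linked, which gives the desired lower bound of $\Omega(tw)$.

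To see this, suppose for contradiction that some disjoint $A,B \subseteq Y$ with $|A|=|B|=m$ admit no $(A,B)$-linkage. Menger's theorem then furnishes a separator $S\subseteq V(H)\setminus(A\cup B)$ of size at most $m-1$ separating $A$ from $B$ in $H-S$. Write $V(H)\setminus S = K_A \sqcup K_B \sqcup K_0$, where $K_A$ is the union of components of $H-S$ meeting $A$, $K_B$ the analogous union for $B$, and $K_0$ the rest. Because each element of $\mathcal{X}$ is connected and any two elements touch (at a common vertex or via an edge of $H$, both of which survive in $H-S$), all bramble elements avoiding $S$ lie in a single component of $H-S$, hence inside exactly one of $K_A,K_B,K_0$. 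Denote this region $K^*$, with the convention $K^*=\emptyset$ when every element of $\mathcal{X}$ meets $S$. Then $Z := S \cup (Y \cap K^*)$ is again a hitting set of $\mathcal{X}$: any bramble element either meets $S$ or is contained in $K^*$ and hence meets $Y \cap K^*$. A direct count using $|Y\cap K_B| \geq |B| = m$ gives $|Z| \leq |S| + (|Y| - m) \leq |Y|-1$, contradicting minimality of $Y$.

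For the matching $O(tw)$ upper bound, any well-linked set of size $k$ in $H$ yields a bramble of order $\Omega(k)$ (for instance, by taking all $\lceil k/2 \rceil$-subsets together with internally disjoint linking paths among them), so bramble-treewidth duality forces the treewidth to be $\Omega(k)$.

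The main subtlety is in the case analysis around $K^*$: one must verify the count $|Y \cap K^*| \leq |Y|-m$ for each possible location of $K^*$ among $K_A, K_B, K_0$ (using that $B \subseteq Y\cap K_B$ gives $|Y\cap K_B|\geq m$, and symmetrically for $A$), and separately handle the degenerate case $K^*=\emptyset$, in which $S$ alone is already a hitting set of size $<m\leq |Y|$. Beyond this bookkeeping the argument is routine.
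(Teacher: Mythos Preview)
Your approach is essentially identical to the paper's: both take a minimum hitting set of a maximum-order bramble and argue via Menger's theorem that a failure of well-linkedness would yield a strictly smaller hitting set. One minor imprecision worth noting: Menger's theorem does not in general give a separator $S$ disjoint from $A\cup B$ (think of a direct edge between $A$ and $B$); the paper sidesteps this by first passing to a witness $(A,B)$ of minimum size, but in fact your count $|Y\cap K^*|\le |Y|-m$ already holds without that assumption, since $K^*\subseteq V(H)\setminus S$ is automatically disjoint from whichever of $A,B$ lies on the opposite side.
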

\begin{proof}
    Let $\mathcal X$ be a bramble of $H$ of maximum order. 
    Recall that the order of $\mathcal X$ is $tw+1$ by the characterization of the treewidth mentioned in Section~\ref{sec:preliminary}. 
    Let $Q$ be the smallest hitting set of $\mathcal X$ with $|Q|=tw+1$. 
    We show that $Q$ is well-linked. If this is not the case,
    there is a witness $(A,B)$ for two disjoint sets $A$ and $B$ of $Q$ with $|A|=|B|$.
    That is, the maximum number of vertex-disjoint paths between $A$ and $B$ is less than $|A|$. 
    Among all such witnesses $(A,B)$, we choose the one that minimizes $|A|$. 
    Then no two vertices, one from $A$ and one from $B$, are adjacent. 
    Then by Menger's theorem, there is a vertex set $X$ of size less than $|A|$ such that every path between $A$ and $B$ intersects $X$. 
    
    We show that $X\cup (Q\setminus A)$ or $X\cup (Q\setminus B)$ is a hitting set of $\mathcal X$, which contradicts that $H$ is the smallest hitting set of $\mathcal X$. 
    To show this, observe that for an element $X$ of $\mathcal X$,
    either $X$ intersects $X$, or 
    it is fully contained in a connected component of $G-X$. 
    The elements of $\mathcal X$ of the first type are hit by $X$,
    and thus it is hit by both $X\cup (H\setminus A)$ and $X\cup (H\setminus B)$. 
    The elements of $\mathcal X$ of the second type must be contained in
    the same connected component of $G-X$ as they touch each other.
    Then either $(H\setminus A)$ or $(H\setminus A)$ hits all such elements. 
    Therefore, either $X\cup (H\setminus A)$ or $X\cup (H\setminus B)$ is a hitting set of $\mathcal X$, and thus the lemma holds.     
\end{proof}

Let $X$ be a well-linked set of $G$ of size $\Theta(tw)$. 
Without loss of generality, we may assume that the size of $X$ is a power of four. 
Assume further that they are sorted along $C$ with respect to the ground points of their corresponding curves of $\Gamma$. 
Then we partition $X$ into four equal-sized subsets $X_1, X_2, X_3, X_4$ so that each subset consists of consecutive vertices of $X$ with respect to their corresponding ground points. See Figure~\ref{fig:clique-proof}(b). 
By the well-linkedness of $X$, there are an $(X_1,X_3)$-linkage $\mathcal P_h$ and an $(X_2,X_4)$-linkage $\mathcal P_v$ of $G$. 
Observe that if all vertices in the paths of $\mathcal P_h\cup \mathcal P_v$
are distinct, 
the geometric representation of every path of $\mathcal P_h$ 
crosses the geometric representation of every path of $\mathcal P_v$. 
Then we simply pair each path of $\mathcal P_h$ with a path of $\mathcal P_v$ and then contract all the edges in the two paths to form a single vertex.
In this way, we have $\Omega(tw)$ contracted vertices, which form a clique. 
However, a vertex in a path of $\mathcal P_h$ might appear in a path of $\mathcal P_v$. To handle this case, we choose new sets $\tilde{X}_1, \tilde X_2, \tilde X_3$ and $\tilde X_4$ using $X_1,X_2,X_3,X_4$ as follows.

\begin{figure}
    \centering
    \includegraphics[width=0.9\textwidth]{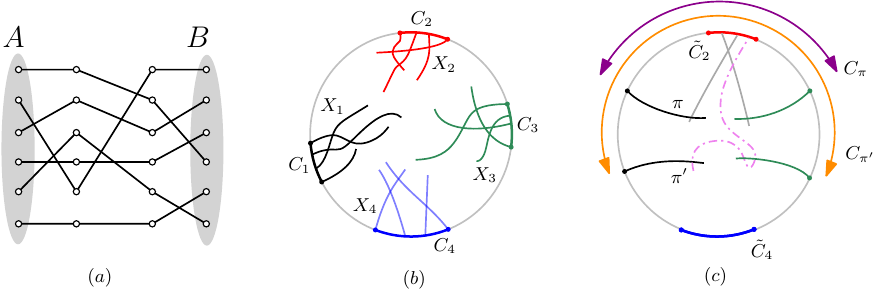}
    \caption{(a) Illustration of an $(A,B)$-linkage. (b) Partition of $X$ into  $X_1,X_2,X_3$ and $X_4$. (c) The case that $C_\pi \subseteq C_{\pi'}$. 
    An internal vertex of $\pi'$ (corresponding to the pink dotted curve) intersects the geometric representation of $\pi$.}
    \label{fig:clique-proof}
\end{figure}

\paragraph{Construction of new sets.}
Recall that the sizes of $X_i$'s are the same. Let $k$ be the size of these sets. 
For $i\in \{1,2,3,4\}$, let $C_i$ be the minimal subarc of $C$ containing
the ground points of all curves corresponding to $X_i$ but not containing the ground point of any curve corresponding to a vertex of $X\setminus X_i$. See Figure~\ref{fig:clique-proof}(b). 
Notice that $C_1, C_2, C_3$ and $C_4$ are pairwise disjoint.
We say two vertex sets $A$ and $B$ of $G$ are \emph{separated} by $C_1$ and $C_3$
if the ground points of the curves corresponding to the vertices of $A$
are contained in one component of $C\setminus \{C_1,C_3\}$, and 
the ground points of the curves corresponding to the vertices of $B$
are contained in the other component of $C\setminus \{C_1,C_3\}$. 
An $(A,B)$-linkage of $G$ is called a \emph{shortest} $(A,B)$-linkage
if its total length is minimum over all $(A,B)$ linkages of $G$. 
Here, the length of a path is defined as the number of vertices in the path. 

Among all pairs $(X_2', X_4')$ of disjoint vertex sets of $V(G)\setminus (X_1\cup X_3)$ separated by $C_1$ and $C_3$
with $|X_2'|=|X_4'|=k$, we choose the one that minimizes the total length of the shortest $(X_2',X_4')$-linkage.
Notice that such a pair always exists since $(X_2,X_4)$ satisfies the conditions for being a candidate. 
Let $\tilde{X}_2$ and $\tilde X_4$ be the resulting sets. Then let $\tilde C_2$ (and $\tilde C_4$) be
the minimal subarc of $C$ containing the ground points of all curves corresponding to $\tilde X_2$ (and $\tilde X_4$) and not containing any other curve of $\tilde X_4$ (and $\tilde X_2$). 
 Then
among all pairs $(X_1',X_3')$ of disjoint vertex sets of $V(G)\setminus (\tilde X_2\cup \tilde X_4)$ separated by $\tilde C_2$ and $\tilde C_4$
with $|X_1'|=|X_3'|=k$,
we choose the one that minimizes the total length of the shortest $(X_1',X_3')$-linkage. 
Notice that such a pair always exists since $(X_1,X_3)$ satisfies the conditions for being a candidate. 
Let $\tilde{X}_1$ and $\tilde X_3$ be the resulting sets. 
By construction, all resulting sets are pairwise disjoint. 

Then a shortest $(\tilde X_1, \tilde X_3)$-linkage satisfies the following property. By changing the roles of $\tilde C_2\cup \tilde C_4$ and $C_1\cup C_3$,
we can show that the following also holds for a shortest $(\tilde X_2, \tilde X_4)$-linkage. 

\begin{lemma}\label{lem:short-linkage}
    Let $\mathcal P$ be a shortest $(\tilde X_1, \tilde X_3)$-linkage. 
    Then the set of paths of $\mathcal P$ of length at least three
    can be decomposed into two subsets such that 
    for any two paths of $\mathcal P$ in the same subset,
    their geometric representations intersect. 
\end{lemma}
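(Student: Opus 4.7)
The plan is to combine a topological nesting observation with a shortcut (exchange) argument exploiting the minimality of $\tilde X_1$ and $\tilde X_3$ among all valid candidate sets. For each length-$\geq 3$ path $\pi\in\mathcal{P}$, I would fix a topological arc $\widetilde\gamma(\pi)\subseteq\gamma(\pi)$ from the ground point of $\pi$'s endpoint in $\tilde X_1$ to that of its endpoint in $\tilde X_3$, obtained by concatenating the strings of $\pi$ along their successive crossings. Together with the arc of $C$ through $\tilde C_2$, this bounds an open region $C_\pi\subset\interior(C)$ containing $\tilde C_2$.

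First I would verify a purely topological statement: whenever $\gamma(\pi)\cap\gamma(\pi')=\emptyset$ for two length-$\geq 3$ paths $\pi,\pi'\in\mathcal{P}$, the regions $C_\pi$ and $C_{\pi'}$ must be nested. Indeed, both contain $\tilde C_2$ so they cannot be disjoint, and since the bounding topological arcs $\widetilde\gamma(\pi),\widetilde\gamma(\pi')$ are disjoint from each other, the two regions cannot properly cross. Hence any non-intersecting pair of length-$\geq 3$ paths is nested in some order.

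The main step is a shortcut argument showing that for every internal vertex $v$ of every length-$\geq 3$ path $\pi\in\mathcal{P}$, the ground point of $v$ must lie in $\tilde C_2\cup\tilde C_4$. Otherwise it would lie in, say, the connected component of $C\setminus(\tilde C_2\cup\tilde C_4)$ containing $\tilde C_1$, and then $v\notin\tilde X_2\cup\tilde X_4$ (since the ground points of $\tilde X_2,\tilde X_4$ are confined to $\tilde C_2,\tilde C_4$). Forming $\tilde X_1'=(\tilde X_1\setminus\{\text{the }\tilde X_1\text{-endpoint of }\pi\})\cup\{v\}$ and keeping the other paths of $\mathcal{P}$ while truncating $\pi$ at $v$ yields a $(\tilde X_1',\tilde X_3)$-linkage of strictly smaller total length (saving at least one vertex), contradicting the minimality of $\tilde X_1$ in its construction. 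The symmetric case with $\tilde X_3$ handles the $\tilde C_3$-component.

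Finally, I would 2-color each length-$\geq 3$ path $\pi$ by the side its first internal vertex is grounded on: color ``2'' if on $\tilde C_2$, ``4'' if on $\tilde C_4$. It remains to verify that any two same-colored paths have intersecting geometric representations, and I expect this to be the main technical hurdle. For two same-colored paths $\pi,\pi'$ with $\gamma(\pi)\cap\gamma(\pi')=\emptyset$, the nesting observation gives, say, $C_\pi\subsetneq C_{\pi'}$; since both first internal vertices attach to $\tilde C_2$ on the boundary of both regions but their strings must connect to the respective first vertices grounded on $\tilde C_1$ outside $C_\pi$, one wants to force a crossing of $\gamma(\pi')$ into $C_\pi$, precisely the scenario illustrated in Figure~\ref{fig:clique-proof}(c). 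Depending on the fine structure of the sticky segments attaching strings to their grounds, this step may require another iteration of the shortcut argument (producing a shorter linkage from a hypothetical same-colored non-intersecting pair) rather than pure topology.
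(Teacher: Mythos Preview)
Your outline matches the paper's proof almost step for step: the same shortcut/exchange argument to force every internal vertex of a length-$\geq 3$ path to be grounded on $\tilde C_2\cup\tilde C_4$, then a two-class partition according to which of $\tilde C_2,\tilde C_4$ carries an internal vertex, and finally a Jordan-curve crossing argument for same-class pairs. (The paper partitions by ``has at least one internal vertex on $\tilde C_2$'' versus ``all internal vertices on $\tilde C_4$'' rather than by the \emph{first} internal vertex; either rule works since all you need is the existence of one internal vertex on the relevant arc.)

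The step you flag as uncertain does \emph{not} need a second exchange argument; pure topology suffices, and this is exactly what the paper does. Take two color-``2'' paths $\pi,\pi'$ and suppose (in your notation) the arc of $C$ through $\tilde C_2$ between the endpoint ground points of $\pi$ is contained in that of $\pi'$. The curve $\widetilde\gamma(\pi)\subset\gamma(\pi)$ together with this arc bounds a region $R$. The internal vertex of $\pi'$ grounded on $\tilde C_2$ has its ground point in the \emph{interior} of that arc, so its string enters $R$; on the other hand the endpoint ground points $a_{\pi'},b_{\pi'}$ lie on $C$ outside that arc, so the corresponding strings enter the complementary region $R'$. Since $\gamma(\pi')$ is connected and (by vertex-disjointness and general position) avoids the two points $a_\pi,b_\pi$, it must meet $\widetilde\gamma(\pi)$, hence $\gamma(\pi)$. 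The interlaced-arc case you already handled by your nesting observation (equivalently: if the endpoint arcs are not nested then $\widetilde\gamma(\pi)$ and $\widetilde\gamma(\pi')$ are forced to cross).
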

\begin{proof}
    Let $\pi$ be a path of $\mathcal P$ of length at least three. 
    For any internal vertex of $\pi$,
    its corresponding curve of $\Gamma$ has the ground point on $\tilde C_2\cup \tilde C_4$.
    Otherwise, we partition $\pi$ into two subpaths $\pi_1$ and $\pi_2$ whose common endpoint $v$ lies outside of $\tilde C_2\cup \tilde C_4$.
    Assume, without loss of generality, it lies the subarc of $C\setminus (\tilde C_2\cup \tilde C_4)$ containing $C_1$. In this case, the other endpoint of $\pi_i$($i$=1 or 2) lies in the other subarc of $C\setminus (\tilde C_2\cup \tilde C_4)$. Then we simply replace $\pi$ with $\pi_i$. In this way, we can decrease the total length of paths of $\mathcal P$ without any conditions for $\mathcal P$.
    This contradicts the choice of $\mathcal P$.
    Then we decompose the paths of $\mathcal P$ of length at least three
    into two subsets $\mathcal P'$ and $\mathcal P''$ such that 
    $\mathcal P'$ consists of all paths of $\mathcal P$ of length at least three containing at least one internal vertex whose corresponding curve has its ground point on $\tilde C_2$, and $\mathcal P''$ consists of all paths of $\mathcal P$ of length at least three not contained in $\mathcal P'$. 

    Now we show that for any two paths $\pi$ and $\pi'$ of $\mathcal P'$,
    their geometric representations intersect. The other case can be handled symmetrically by changing the roles of $\tilde C_2$ and $\tilde C_4$.  
    Let $C_\pi$ be the subarc of $C$ containing $\tilde C_2$ lying between the
    ground points of the strings corresponding to the endpoints of $\pi$. See Figure~\ref{fig:clique-proof}(c). 
    Similarly, let $C_{\pi'}$ be the subarc of $C$ containing $\tilde C_2$ lying between the ground point of the strings corresponding to the endpoints of $\pi$.  
    If neither $C_\pi$ nor $C_{\pi'}$ contains the other,
    $\pi$ and $\pi'$ cross because the geometric representation of $\pi$ (and $\pi'$) is connected. 
    If this is not the case, without loss of generality, assume that $C_\pi$ is contained in $C_{\pi'}$. Since $\pi'$ has an internal vertex whose corresponding string has its ground point on $\tilde C_2$, the geometric representation of $\pi$ must intersect the string. See the pink dotted curve in Figure~\ref{fig:clique-proof}(c). Therefore,
    the lemma holds. 
\end{proof}

\begin{figure}
    \centering
    \includegraphics[width=0.65\textwidth]{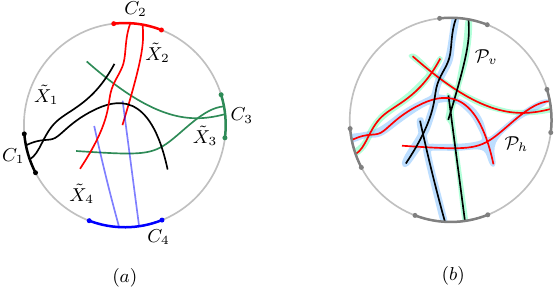}
    \caption{(a) Illustration of $\tilde X_1, \tilde X_2, \tilde X_3$ and $\tilde X_4$. (b) Illustration of length-two paths of $\mathcal P_v$ and $\mathcal P_h$. Two geometric representations of two pairs of paths of $\mathcal P_h$ and $\mathcal P_v$ intersect.}
    \label{fig:clique-minor}
\end{figure}

Now we are ready to show that $G$ has a clique minor of size $\Omega(tw)$.
Recall that $\tilde X_i$'s have size $k=\Omega(tw)$. 
Let $\mathcal P_v$ be a shortest $(\tilde X_1, \tilde X_3)$-linkage,
and $\mathcal P_h$ be a shortest $(\tilde X_2, \tilde X_4)$-linkage.
If $\mathcal P_v$ (or $\mathcal P_h$) contains $k/2$ paths of length at least three, then we are done.
To see this, observe that $\mathcal P_v$ (or $\mathcal P_h$) 
contains at least $k/4$ 
 paths whose geometric representations intersect by Lemma~\ref{lem:short-linkage}. 
For each such path, we contract all edges in the path into a single vertex.
Then at least $k/4$ vertices form a clique,
and thus $G$ has a clique minor of size $k/4$.

Thus we assume each of $\mathcal P_v$ and $\mathcal P_h$ contains
at least $k/2$ paths of length \emph{at most} two. 
Recall that $\tilde X_i\cap \tilde X_j=\emptyset$ for any two distinct indices $i,j\in\{1,2,3,4\}$.
This implies that every such path of $\mathcal P_v$ and $\mathcal P_h$ has length \emph{exactly} two. 
Moreover, the paths of $\mathcal P_v\cup \mathcal P_h$ of length exactly two
are pairwise vertex-disjoint since $\tilde X_i\cap \tilde X_j=\emptyset$.
Since the geometric representation of every path of $\mathcal P_h$ 
crosses the geometric representation of every path of $\mathcal P_v$, 
we pair each path of $\mathcal P_h$ with a path of $\mathcal P_v$ and then contract all the edges in the two paths to form a single vertex. 
In this way, we have $k/2$ contracted vertices, which form a clique. See Figure~\ref{fig:clique-minor}.
Therefore, in any case, $G$ has a clique minor of size $k/2 =\Omega(tw)$. 


\subsection{Step 3. Lower Bound on the Maximum Crossing-Level}
As the third step, we show that the maximum crossing-level of $\Gamma$ is $\Omega(tw)$. 
First, we reduce the general case to the case that $\Gamma$ consists of 
\emph{double-grounded circularly ordered curves},
and then we analyze the maximum crossing-level of $\Gamma$ in this case.
A \emph{double-grounded} curve is a curve having both endpoints on $C$.
We say that double-grounded curves $\gamma_1,\gamma_2,\ldots,\gamma_k$ are \emph{circularly ordered}
if $x_1,x_2,\ldots,x_k,y_1,y_2,\ldots,y_{k-1}$ and $y_k$ lie on $C$ in the counterclockwise order,
where $x_i$ and $y_i$ denote two endpoints  of $\gamma_i$ for $i=1,\ldots,k$.
See Figure~\ref{fig:largeclique}(a).

\begin{figure}
	\centering
	\includegraphics[width=0.8\textwidth]{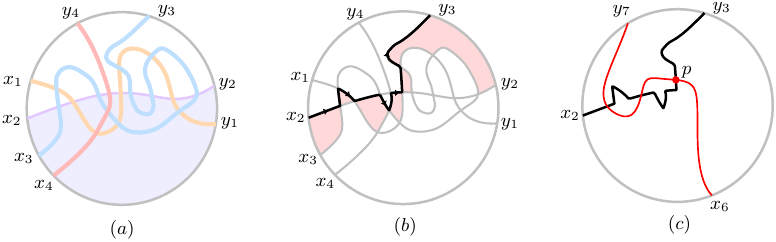}
	\caption{(a) Illustration of double-grounded circularly ordered curves. 
 The colored region is $h_2$. 
 (b-c) Illustrates $P_1$ (black curve) and $P_2$ (red curve). The crossing-level of $p$ is at least $k/2$. }
		\label{fig:largeclique}
\end{figure}  

\paragraph{Reduction to the double-grounded circularly ordered case.}
Let $k$ be the largest integer such that $G$ has a clique minor of size $4k$.
Notice that $k=\Omega(tw)$ by Section~\ref{sec:step-2}. 
Then $G$ has a \emph{model} $\mu$ of $K_{4k}$, that is, 
there is a function $\mu$ that maps the vertices of $K_{4k}$ to 
vertex-disjoint connected subgraphs of $G$
such that, for any two vertices $u$ and $v$ of $K_k$, $\mu(u)$ and $\mu(v)$ are adjacent in $G$. In other words, the edges of $\mu(v)$ are contracted into $v$ in the construction of the clique minor of $G$ of size $k$. 

For a vertex $v$ of $K_{4k}$, recall that the geometric representation of $v$ is defined
as the union of the strings of $\Gamma$ corresponding to the vertices of $\mu(v)$. 
We choose an arbitrary one as its ground point of $v$ (and its geometric representation).
Let $\langle v_1,v_2,\ldots, v_{4k}\rangle$ be the sequence of vertices of $K_{4k}$
sorted with respect to their ground points along $C$ in the counterclockwise order. 
We pair a vertex $v_i$ in the first half of the sequence 
with $v_{i+2k}$, and then find a curve
in the union of their geometric representations. 
More specifically, 
for each index $i$ with $i=1,\ldots,2k$, consider the union of the geometric representations of $v_i$ and $v_{i+2k}$. Notice that the union consists of a single connected component since $v_i$ and $v_{i+2k}$ are adjacent in $K_{4k}$. 
Let $x_i$ be the ground point of $v_i$, and let $y_i$ be the ground point of $v_{i+2k}$. Then the union contains a curve with endpoints $x_i$ and $y_i$.
Let $\gamma_i$ be a curve between $x_i$ and $y_i$ contained in the union. 
By construction, $\gamma_1,\gamma_2,\ldots,\gamma_{2k}$ are doubly-grounded circularly ordered curves. 
Due to the following lemma, it is sufficient to show that
the maximum crossing-level of $k$ double-grounded circularly ordered curves
is $\Omega(k)$. 
\begin{restatable}{lemma}{dgcoc}\label{lem:dgcoc}
    The max crossing-level of $\Gamma$ is 
    at least the max crossing-level of $\{\gamma_1,\ldots,\gamma_{2k}\}$. 
\end{restatable}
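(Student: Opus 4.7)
My plan is to charge each crossing of a $\gamma_i$ by a path from $p$ to the ground against a distinct string of $\Gamma$, exploiting the fact that the clique-minor models $\mu(v_1),\ldots,\mu(v_{4k})$ are pairwise vertex-disjoint. This direct charging argument transfers lower bounds on the crossing-level from $\{\gamma_1,\ldots,\gamma_{2k}\}$ to $\Gamma$.

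More concretely, I would fix a point $p\in\cin$ attaining the maximum crossing-level $r$ of $\{\gamma_1,\ldots,\gamma_{2k}\}$, and consider an arbitrary continuous path $P$ from $p$ to $C$ inside $\cin$. By a standard perturbation one may assume that $P$ avoids every vertex of the arrangement of $\Gamma$ and crosses each string of $\Gamma$ transversally. Let $I_P\subseteq\{1,\ldots,2k\}$ denote the set of indices for which $P$ crosses $\gamma_i$, and let $S_P$ denote the set of strings of $\Gamma$ crossed by $P$. By the definition of crossing-level for $\{\gamma_1,\ldots,\gamma_{2k}\}$, we have $|I_P|\ge r$ for every such $P$.

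The core step is to establish $|S_P|\ge |I_P|$. For each $i\in I_P$, pick one crossing point $q_i$ of $P$ with $\gamma_i$; by construction $\gamma_i$ is contained in the union of the geometric representations of $\mu(v_i)$ and $\mu(v_{i+2k})$, so locally around $q_i$ the curve $\gamma_i$ coincides with a single string $s_i\in\mu(v_i)\cup\mu(v_{i+2k})$, and $P$ transversally crosses $s_i$ at $q_i$. Because $\mu(v_1),\ldots,\mu(v_{4k})$ are pairwise vertex-disjoint, the string sets $\mu(v_i)\cup\mu(v_{i+2k})$ for distinct $i\in\{1,\ldots,2k\}$ are pairwise disjoint. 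Hence the strings $\{s_i : i \in I_P\}$ are all distinct, giving $|S_P|\ge |I_P|\ge r$. Minimizing over $P$ shows the crossing-level of $p$ in $\Gamma$ is at least $r$, which yields the lemma.

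The main point to be careful about is the bookkeeping at the crossing points: when $P$ meets $\gamma_i$ at $q_i$, the local identification of $\gamma_i$ with a single string $s_i$ of $\Gamma$ must be unambiguous, which is why I insist that $q_i$ avoid vertices of the arrangement of $\Gamma$ (achievable by a generic perturbation of $P$, using the general position assumption on $\Gamma$). Once this is in place, the disjointness of the clique-minor models does all the real work, and the remainder of the argument is purely definitional.
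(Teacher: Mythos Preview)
Your proof is correct and follows essentially the same line as the paper's: both take an arbitrary curve from a point to the ground, observe that it must cross at least $r$ of the $\gamma_i$, and then use the pairwise vertex-disjointness of the models $\mu(v_1),\ldots,\mu(v_{4k})$ to conclude that crossings with distinct $\gamma_i$ are witnessed by distinct strings of $\Gamma$. Your extra care about perturbing $P$ off arrangement vertices so that each crossing point lies on a single well-defined string is more explicit than the paper's version, but the argument is the same.
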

\begin{proof}
    Let $q$ be a point in $\cin$, and let $\ell(q)$ be the crossing-level of $q$ in 
    $\{\gamma_1,\gamma_2,\ldots,\gamma_{2k}\}$. 
    We show that the crossing-level of $q$ in $\Gamma$ is at least $\ell(q)$,
    which implies the lemma. 
    Consider a curve $\pi$ connecting $q$ and $C$.
    The number of curves of $\{\gamma_1,\gamma_2,\ldots,\gamma_{2k}\}$
    intersected by $\pi$ is at least $\ell(q)$. 
    By the construction of $\{\gamma_1,\gamma_2,\ldots,\gamma_{2k}\}$,
    for any point in $\gamma_i$ and any point in $\gamma_j$ with $i\neq j$,
    they come from different strings of $\Gamma$. 
    Therefore, the number of strings of $\Gamma$ intersected by $\pi$
    is at least $\ell(q)$, and thus the lemma holds. 
\end{proof}

\paragraph{Analysis of the double-grounded circularly ordered case.}
In the following, we focus on the set $\{\gamma_1,\ldots,\gamma_{2k}\}$ of double-grounded circularly ordered curves along $C$. 
To make the description easier, we assume that $k$ is even. 
   We give a direction to each path $\gamma_i$ from $x_i$ to $y_i$. 
    Then every edge of the arrangement of $\{\gamma_1,\ldots,\gamma_{2k}\}$
    has its direction.
    By the general position assumption, every vertex of the arrangement has
    two incoming arcs and two outgoing arcs. 
    Then $\gamma_i$ subdivides $\cin$ into two regions. The (closed) region 
    lying locally to the right of $\gamma_i$ is denoted by $h_i$.  
    See Figure~\ref{fig:largeclique}(a). 
    Let $H_1=\{h_1,h_2,\ldots h_{k}\}$ and $H_2=\{h_{k+1},h_{k+2},\ldots, h_{2k}\}$. 
    Let $P_i$ be the set of points contained in exactly $k/2$ regions of $H_i$ 
    and lying on the boundary curves of the regions of $H_i$ for $i=1,2$. 
    Then $P_i$ forms a simple curve, but in the following, we prove the weaker property that $P_i$ \emph{contains} a curve as it is sufficient for our purpose. 
    
\begin{restatable}{lemma}{simple}\label{lem:simple}
    The set $P_i$ contains a simple curve connecting two points on $C$ for $i=1,2$. 
\end{restatable}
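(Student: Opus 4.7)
The plan is to prove the claim for $i=1$; the case $i=2$ is symmetric after swapping $H_1$ and $H_2$. Define the level function
\[
\ell(q) := \bigl|\{\, l \in \{1,\ldots,k\} : q \in h_l\,\}\bigr|,\qquad q \in \cin,
\]
so that $P_1$ is precisely the set of points lying on some $\gamma_l$ with $l \le k$ and satisfying $\ell(q)=k/2$. The strategy is to trace the ``median level'' of the arrangement of $\gamma_1,\ldots,\gamma_k$ inside $\cin$ via a parity argument on its level boundary.

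First I would compute $\ell$ on $C$. Orienting $\partial h_l$ counterclockwise shows that $h_l \cap C$ is the CCW arc from $x_l$ to $y_l$, passing through $x_{l+1},\ldots,x_{2k},y_1,\ldots,y_{l-1}$. Counting memberships on each of the $4k$ small arcs between consecutive grounding points then yields a clean pattern: starting just past $y_{2k}$ with $\ell=0$, as we traverse $C$ counterclockwise the value $\ell$ increases by $1$ at each of $x_1,\ldots,x_k$ (reaching $k$), stays at $k$ across $x_{k+1},\ldots,x_{2k}$, then decreases by $1$ at each of $y_1,\ldots,y_k$ (returning to $0$). In particular, $\ell$ crosses the threshold $k/2$ on $C$ at exactly two points: ascending at $x_{k/2}$ and descending at $y_{k/2+1}$.

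Next, let $\mathcal A$ denote the planar arrangement of $\gamma_1,\ldots,\gamma_k$ inside $\cin$, and let $B$ be the subgraph of $\mathcal A$ consisting of all arcs separating a face with $\ell \ge k/2$ from a face with $\ell \le k/2-1$. At each interior crossing of some $\gamma_l$ and $\gamma_m$, the four surrounding sectors have $\ell$-values in the cyclic pattern $a, a+1, a+2, a+1$; a short case analysis in $a$ shows the crossing is incident to either $0$ or exactly $2$ arcs of $B$. From the Step~1 pattern, the only $C$-vertices incident to a $B$-arc are $x_{k/2}$ and $y_{k/2+1}$, each of $B$-degree one. Hence $B$ is a planar graph whose only odd-degree vertices are $x_{k/2}$ and $y_{k/2+1}$, so a standard Eulerian/parity argument extracts a simple path in $B$ from $x_{k/2}$ to $y_{k/2+1}$ --- a simple curve with both endpoints on $C$.

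The main anticipated obstacle is verifying this curve lies in $P_1$. On an open $\gamma_l$-arc of the path every point has $\ell=k/2$ (the max of the two adjacent face values, using that $h_l$ is closed) and so lies in $P_1$; a direct membership count shows both endpoints $x_{k/2}$ and $y_{k/2+1}$ lie in $P_1$ as well. At a ``good'' interior crossing (case $a=k/2-2$) the crossing point itself has $\ell=a+2=k/2$ and is in $P_1$. The difficulty is at ``bad'' crossings (case $a=k/2-1$), where $\ell(p)=k/2+1$ so $p \notin P_1$, even though the two $\gamma$-arcs of the path meeting at $p$ are both in $P_1$. I expect to resolve these finitely many isolated singularities by a local modification near each such $p$ --- either by rerouting the curve slightly within $\overline{P_1}$, or by interpreting ``contains a curve'' to allow finitely many closure points, which still suffices for the subsequent crossing-level argument. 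Making this perturbation rigorous while preserving simplicity and the $C$-endpoints is the delicate technical step.
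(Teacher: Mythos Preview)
Your approach and the paper's are two presentations of the same construction: both trace the $k/2$-level of the arrangement of $\gamma_1,\ldots,\gamma_k$. The paper does this by an explicit walk starting at $x_{k/2}$ (at each crossing, switch to the other curve's outgoing arc), proving level-invariance arc by arc and simplicity by contradiction; you instead take the boundary $B$ of $\{\ell\ge k/2\}$, check that every interior crossing has $B$-degree $0$ or $2$, and extract the unique path component joining the two degree-$1$ vertices $x_{k/2}$ and $y_{k/2+1}$ on $C$. A local case check shows that at each crossing the two $B$-arcs are exactly the incoming/outgoing pair the paper's walk uses, so the two curves coincide. Your framing is the standard ``$k$-level'' picture from arrangement theory; the paper's is a hands-on traversal of the same object.

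The ``bad vertex'' obstacle you flagged is real and is present in the paper's argument as well: when $a=k/2-1$ the crossing point lies in $k/2+1$ closed regions of $H_1$, so strictly speaking neither construction produces a curve entirely inside $P_1$ as defined. No local rerouting within $P_1$ is possible (near such a vertex $P_1$ consists only of the two incident open arcs), but none is needed: your second suggestion is the right fix. The only use of the lemma is in Lemma~\ref{lem:level}, where this curve is intersected with the analogous $H_2$-curve. By the general-position assumption no three of the $\gamma_j$ meet at a point, so the $H_2$-curve (supported on $\gamma_{k+1},\ldots,\gamma_{2k}$) avoids every crossing of the $H_1$-arrangement; hence any intersection point lies in the interior of an arc on both sides and genuinely has level exactly $k/2$ in each family. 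So the ``delicate technical step'' you anticipated dissolves once you look one lemma ahead.
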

\begin{proof}
        We prove this for $P_1$ only. The other case can be handled similarly. 
    We traverse the arrangement of the curves of $\{\gamma_1,\ldots,\gamma_{k}\}$ along a simple curve as follows.
     Starting from $x_{k/2}$, we move along $\gamma_{k/2}$ until we reach a vertex of the arrangement. Whenever we meet a vertex of the arrangement,
     this vertex is an intersection point between the curve we are traversing and another curve in $\{\gamma_1,\ldots,\gamma_{k}\}$.
     Then we follow the outgoing arc of the
     vertex lying on the new curve. We repeat this until we reach a point of $C$. 
     Let $\rho$ be the set of points we have visited in this way. 
     We claim that $\rho$ is contained in $P_i$, and it is a simple curve,
     which implies the lemma. 

     We first show that each (directed) arc of $\rho$ is contained in $P_i$. 
     As $x_{k/2}$ is contained in exactly $k/2$ regions of $H_1$, the claim holds for the first arc. 
     Consider two consecutive arcs $\eta$ and $\eta'$ of $\rho$.
     Assume that the claim holds for $\eta$, and we show that the claim also holds for $\eta'$. 
     Let $h$ and $h'$ be the two regions of $H_1$ such that
     the boundary curve of $h$ contains $\eta$, and the boundary curve of $h'$ contains $\eta'$. 
     The regions of $H_1\setminus\{h,h'\}$ containing $\eta$ also contains $\eta'$, and vice versa.
     On the other hand, 
     there are two possibilities: (1) $h$ contains both $\eta$ and $\eta'$, and $h'$ contains both $\eta$ and $\eta'$, or (2) $h$ contains $\eta$ only, and $h'$ contains $\eta'$ only. 
     Therefore, the number of regions of $H_1$ containing $\eta'$ (and $\eta$) 
     are the same,
     and thus $\rho$ is contained in $P_i$. See also Figure~\ref{fig:largeclique}(b).

    Moreover, $\rho$ is a simple curve, that is, it does not contain a cycle. 
    If this is not the case, $\rho$ contains three arcs of the arrangement sharing a common vertex.   
    Two of them are contained in the boundary curve of the
    same region $h$ of $H_1$. Let $h'$ be the region of $H_1$ whose boundary curve
    contains the other arc. 
    The regions of $H_1\setminus \{h'\}$ containing one arc coming from $h$
    contain the other arc coming from $h$, and vice versa. On the other hand, 
    one arc coming from $h$ is not contained in $h'$, and 
    the other arc coming from $h$ is contained in $h'$.
    Therefore, not both arcs coming from $h$ are contained in $P_1$,
    which contradicts that $\rho$ contains both arcs.
    Therefore, $\rho$ does not contain a cycle, and thus it ends at a point on $C$.
\end{proof}

\begin{restatable}{lemma}{level}\label{lem:level}
    We have $P_1\cap P_2\neq\emptyset$, and a point in $P_1\cap P_2$ has a crossing-level at least $k/2$.
\end{restatable}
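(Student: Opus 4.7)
My plan is to prove the two assertions of Lemma~\ref{lem:level} separately.

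For $P_1\cap P_2\neq\emptyset$, I would apply Lemma~\ref{lem:simple} to obtain simple curves $\rho_1\subseteq P_1$ and $\rho_2\subseteq P_2$, each joining two points of $C$, and then argue topologically that their endpoints alternate on $C$, so the curves must cross. Tracing the construction in the proof of Lemma~\ref{lem:simple}, $\rho_1$ starts at $x_{k/2}$, and because all its arcs are subarcs of curves $\gamma_\ell$ with $\ell\leq k$, it must terminate at some $y_\ell$ with $\ell\leq k$; symmetrically, $\rho_2$ connects $x_{3k/2}$ to some $y_{\ell'}$ with $\ell'>k$. In the CCW cyclic order $x_1,\ldots,x_{2k},y_1,\ldots,y_{2k}$ on $C$, the four endpoints appear in the order $x_{k/2}, x_{3k/2}, y_\ell, y_{\ell'}$, so the endpoints of $\rho_1$ and $\rho_2$ strictly interleave. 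Since $\rho_1$ is a simple curve in the disk whose endpoints are on $C$, it separates $\textnormal{int}(C)$ into two regions, one of which has $x_{3k/2}$ on its boundary and the other $y_{\ell'}$; hence $\rho_2$ must cross $\rho_1$, and any crossing point lies in $P_1\cap P_2$.

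For the crossing-level lower bound, I would use a level-counting argument. For each point $x\in\textnormal{int}(C)$, define $L_1(x)=|\{\ell\leq k:x\in h_\ell\}|$ and $L_2(x)=|\{\ell>k:x\in h_\ell\}|$, so that $L_1(p)=L_2(p)=k/2$ by the definitions of $P_1$ and $P_2$. For $q\in C$, the set $T_q=\{\ell:q\in A_\ell\}$ is directly seen to be either a prefix $\{1,\ldots,b\}$ or a suffix $\{a,\ldots,2k\}$ of $\{1,\ldots,2k\}$ from the structure of the arcs $A_\ell$; a short computation then shows that $(L_1(q),L_2(q))$ always lies on the boundary of the square $[0,k]^2$. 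For any path $\pi$ from $p$ to such a $q$, each curve $\gamma_\ell$ transversally crossed by $\pi$ flips the $h_\ell$-membership; thus the number of distinct curves $\gamma_\ell$ with $\ell\leq k$ crossed by $\pi$ is at least $|L_1(p)-L_1(q)|$, and similarly for $\ell>k$. Since the two index classes are disjoint, $\pi$ crosses at least $|L_1(p)-L_1(q)|+|L_2(p)-L_2(q)|$ distinct strings in total. Minimizing over $q\in C$, this $\ell^1$-distance from $(k/2,k/2)$ to the boundary of $[0,k]^2$ is exactly $k/2$, yielding the claim.

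The principal technical obstacle is that $p$ lies on two curves $\gamma_i$ and $\gamma_j$ (with $i\leq k<j$), so the footnote in the definition of crossing-level permits the path to depart $p$ into any of the four local face-quadrants without counting the curves $p$ already lies on. A priori this shifts the effective starting levels by $\pm 1$ in each coordinate and weakens the bound by a constant. I would handle this through a careful quadrant-by-quadrant analysis combined with the prefix/suffix form of $T_q$: any saving in one coordinate amounts to moving $i$ or $j$ in or out of the membership count, and the minimizing $q$ on the boundary of $[0,k]^2$ is constrained by the prefix/suffix structure of $T_q$ so that these savings cannot be realized simultaneously with the extremal choice of $(L_1(q),L_2(q))$, and the full bound $k/2$ is recovered.
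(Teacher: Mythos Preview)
Your approach matches the paper's. For the first assertion, the paper also uses Lemma~\ref{lem:simple} and interleaving, but pins down the endpoints more precisely: it observes directly that $P_1\cap C=\{x_{k/2},y_{k/2+1}\}$ and $P_2\cap C=\{x_{3k/2},y_{3k/2+1}\}$, so the two simple curves must connect exactly these pairs. Your weaker claim that $\rho_1$ ends at some $y_\ell$ with $\ell\le k$ already suffices for the interleaving, and in fact forces $\ell=k/2+1$ once one knows $P_1\cap C$. For the second assertion, your level-counting via $(L_1,L_2)$ and the $\ell^1$-distance to the boundary of $[0,k]^2$ is a uniform repackaging of the paper's proof, which instead splits into cases according to which arc of $C$ contains $z$ and in each case exhibits $k/2$ regions $h_\ell$ containing exactly one of $p$ and $z$; your prefix/suffix description of $T_q$ is precisely what drives those cases.

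On the obstacle you flag (that $p$ lies on two curves $\gamma_i,\gamma_j$ and the crossing at $p$ is not counted): the paper does not address this either. In its Case~2, for instance, it counts $\gamma_j$ among the $k/2$ separating curves even though a path may leave $p$ directly into the non-$h_j$ side without ever crossing $\gamma_j$ again, so as written the paper's argument also yields only $k/2-O(1)$. Your proposed recovery via the prefix/suffix structure of $T_q$ is plausible but not worked out in enough detail to be checked; since the lemma is used only to obtain an $\Omega(k)$ lower bound on the maximum crossing-level, the constant loss is immaterial downstream.
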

\begin{proof}
     Notice that $P_1\cap C$ consists of two points $x_{k/2}$ and $y_{k/2+1}$.
    Therefore, $P_1$ contains a simple curve  connecting $x_{k/2}$ and $y_{k/2+1}$.
    Similarly, $P_2$ contains a simple curve connecting $x_{3k/2}$ and $y_{3k/2+1}$. 
    Since $x_{k/2}, x_{3k/2}, y_{k/2+1}$ and $y_{3k/2+1}$ lie on $C$ in the counterclockwise order, the two simple curves cross. 
    See Figure~\ref{fig:largeclique}(c). 
    Notice that an intersection point $p$ 
    of the two simple curves is contained in exactly $k/2$ regions of $H_i$ for $i=1,2$.

    Now we show that the crossing-level of $p$ is at least $k/2$. 
    Let $z$ be a point in $C$. It suffices to show that 
    any curve connecting $p$ and $z$ intersects at least $k/2$ curves of $\{\gamma_1,\ldots,\gamma_{2k}\}$.
    For two points $x'$ and $y'$ on $C$, we let $C[x',y']$ be the circular arc of $C$ 
    from $x'$ to $y'$ in the counterclockwise direction. 
    We show this through case studies of the position of $z$.
    Consider the case that $z\in C[x_{2k},y_1]$. Recall that 
    the common intersection of all regions of $H_1\cup H_2$ contains $C[x_{2k},y_1]$. 
    Recall also that the number of regions of $H$ not containing $p$ is exactly $k$.
    The region $h_i$ of $H$ not containing $p$ is defined by 
    the double-grounded curve $\gamma_i$. Then $\gamma_i$ separates $p$ and $z$. In this case, any curve connecting $p$ and $z$
    intersects at least $k$ different curves of $\{\gamma_1,\gamma_2,\ldots,\gamma_{2k}\}$, and thus we are done. 

    Now consider the case that $z\in C[x_1,x_{k}]\cup C[y_1,y_k]$. 
    Then the regions of $H_2$ containing $p$ do not contain $z$. 
    Therefore,
    at least $k/2$ curves of $\{\gamma_{k+1},\gamma_2,\ldots,\gamma_{2k}\}$ separate $p$ and $z$, and thus any curve connecting $p$ and $z$ intersects at least $k/2$
    different curves of $\{\gamma_1,\gamma_2,\ldots,\gamma_{2k}\}$.

    The other cases can be handled similarly.
    Specifically, the case that $z\in C[y_{2k},x_1]$ can be handled symmetrically to the first case by changing the roles of the regions of $H$ not containing $p$ and the regions of $H$ containing $p$. 
    The case that  $z\in C[x_k,x_{2k}]\cup C[y_k,y_{2k}]$ can be handled symmetrically
    to the second case by changing the roles of $H_1$ and $H_2$.   
\end{proof}

In this way, we can show that for an outerstring graph $G$ with a geometric representation $\Gamma$,
    the maximum crossing-level in $\Gamma$ is $\Omega(tw)$, where $tw$ denotes the treewidth of $G$. 

\begin{theorem}\label{thm:arboricity}
    The treewidth of an outerstring graph is $O(\alpha\log n)$, where $\alpha$ and $n$ denote the arboricity and the number of vertices of the graph, respectively. 
\end{theorem}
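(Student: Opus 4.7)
The plan is to combine the three steps laid out in the section. By Lemma~\ref{cor:crossing-level}, the maximum crossing-level of any geometric representation $\Gamma$ of $G$ is at most $4\alpha(\log n + 1) = O(\alpha \log n)$. On the other hand, Steps~2 and~3 together give a lower bound on the maximum crossing-level in terms of $tw$, so chaining the two inequalities yields the desired bound on $tw$.

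Concretely, I would first apply the Folklore Lemma (Lemma~\ref{lem:folklore}) to extract a well-linked set $X \subseteq V(G)$ of size $\Theta(tw)$. Feeding this well-linked set into the construction of Step~2 produces a clique minor of $G$ of size $k = \Omega(tw)$; I may shrink this slightly so that $4k$ is a power that fits the circularly ordered construction of Step~3. Step~3 then turns this clique minor into a family of $2k$ double-grounded circularly ordered curves $\gamma_1, \ldots, \gamma_{2k}$ living inside $\Gamma$, and Lemmas~\ref{lem:simple} and \ref{lem:level} exhibit a point $p \in \interior(C)$ whose crossing-level with respect to $\{\gamma_1, \ldots, \gamma_{2k}\}$ is at least $k/2 = \Omega(tw)$. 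Lemma~\ref{lem:dgcoc} lifts this bound back to $\Gamma$: the crossing-level of $p$ in $\Gamma$ itself is at least $\Omega(tw)$, so the maximum crossing-level of $\Gamma$ is $\Omega(tw)$.

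Combining the two bounds gives $\Omega(tw) \le 4\alpha(\log n + 1)$, hence $tw = O(\alpha \log n)$. Since none of the intermediate steps require an explicit geometric representation beyond what is used inside each already-proved lemma, the conclusion holds for any outerstring graph $G$.

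The main obstacle is not a new technical calculation at this stage but rather a bookkeeping one: making sure that the well-linked set produced by Lemma~\ref{lem:folklore} is compatible with the circular-order requirements of Step~3, so that the clique minor we feed into Step~3 really does have size at least $4k$ with $k = \Omega(tw)$ after all rounding. This is handled by absorbing constant factors into the $O(\cdot)$ and $\Omega(\cdot)$, and by noting that the construction in Step~2 pairs opposite quarters of the cyclic ordering along $C$, which is exactly what Step~3 needs.
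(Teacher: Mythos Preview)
Your proposal is correct and follows essentially the same approach as the paper: the theorem is obtained by combining the upper bound on the maximum crossing-level from Lemma~\ref{cor:crossing-level} with the lower bound $\Omega(tw)$ furnished by Steps~2 and~3 (via the clique minor and the double-grounded circularly ordered curves), exactly as you outline. The paper does not give a separate proof of the theorem beyond this synthesis, and your bookkeeping remarks about rounding and the cyclic ordering match how the paper handles these issues in Section~\ref{sec:step-2}.
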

\begin{corollary}\label{cor:biclique-free}
 The treewidth of a $K_{t,t}$-free outerstring graph is $O(t(\log t)\log n)$. 
\end{corollary}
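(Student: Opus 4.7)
The plan is to derive this corollary by directly chaining together the two main results already proved in the excerpt: the arboricity bound for biclique-free outerstring graphs (Lemma~\ref{lem:sparsity}) and the treewidth bound in terms of arboricity (Theorem~\ref{thm:arboricity}).

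More precisely, I would first invoke the second bullet of Lemma~\ref{lem:sparsity}, which states that any $t$-biclique-free (that is, $K_{t,t}$-free) outerstring graph has arboricity $\alpha = O(t \log t)$. This is the only nontrivial combinatorial input and it is already established in the paper via the edge-count bounds of \cite{fox2012string, lee2017separators}. I would then feed this arboricity bound into Theorem~\ref{thm:arboricity}, which gives treewidth $O(\alpha \log n)$ for any outerstring graph on $n$ vertices. Substituting $\alpha = O(t \log t)$ yields
\[
\mathrm{tw}(G) \;=\; O(\alpha \log n) \;=\; O\bigl(t (\log t) \log n\bigr),
\]
as required.

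Since both ingredients are already in hand, there is no real obstacle: the proof is essentially a one-line composition. The only thing worth emphasizing is that the arboricity bound in Lemma~\ref{lem:sparsity} holds for every induced subgraph (indeed every subgraph) of $G$, which is precisely the property needed so that arboricity---rather than just average degree---is the correct quantity to plug into Theorem~\ref{thm:arboricity}. With this, the corollary follows immediately.
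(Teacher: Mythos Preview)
Your proposal is correct and matches the paper's approach exactly: the corollary is stated immediately after Theorem~\ref{thm:arboricity} with no separate proof, because it follows by combining the arboricity bound $\alpha = O(t\log t)$ from Lemma~\ref{lem:sparsity} with the treewidth bound $O(\alpha\log n)$ from Theorem~\ref{thm:arboricity}.
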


\section{Lower Bound of the Treewidth of an Outerstring Graph}
\label{sec:lower-bound-crossing-level}
In this section, we show that the bound of Theorem~\ref{thm:arboricity} is almost tight by constructing an \emph{outersegment graph} $G$ with treewidth $\Theta(\alpha\log (n/\alpha))$, where $\alpha$ is the arboricity of $G$. 
An outersegment graph $G$ is the intersection graph of \emph{segments} grounded on $C$.
We say an outersegment graph is $k$-directional if it has a geometric representation
consisting of grounded segments of $k$ orientations. 
To make the description easier, we first present a 2-directional outersegment graph with arboricity $O(1)$ of treewidth $\Omega(\log n)$. 

\begin{restatable}{lemma}{lowerbound}\label{lem:lowerbound-constant}
    There is a 2-directional outersegment graph with arboricity $O(1)$ and treewidth 
    $\Omega(\log n)$, where $n$ denotes the number of its vertices.
\end{restatable}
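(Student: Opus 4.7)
\emph{Plan.} The plan is to exhibit an explicit hierarchical construction. Set $d = \lfloor \log_2 n \rfloor$ and build a 2-directional outersegment graph $G_d$ on $\Theta(2^d) = \Theta(n)$ vertices, organized into $d$ nested ``scales,'' so that (i) the total number of crossings is $O(n)$ (forcing arboricity $O(1)$) and (ii) a bramble of order $\Omega(d)$ can be extracted (forcing treewidth $\Omega(\log n)$ via the bramble characterization recalled in Section~\ref{sec:preliminary}).

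First, I would fix a disk with ground curve $C$ and place horizontal and vertical segments scale by scale. At scale $\ell \in \{1,\dots,d\}$ I place roughly $2^\ell$ horizontal segments grounded on one arc of $C$ and $2^\ell$ vertical segments grounded on another, choosing lengths and ground-point positions so that each scale-$\ell$ segment crosses only a bounded number of other segments, and those crossings link scale $\ell$ only to the immediately neighboring scales $\ell\pm 1$. Summing the scale-wise bound, the total number of crossings is $\sum_{\ell=1}^{d} O(2^\ell) = O(n)$, and since this bound is inherited by every induced subgraph, the arboricity is $O(1)$.

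For the treewidth lower bound, for each $\ell \in \{1,\dots,d\}$ define $X_\ell$ to be the connected subgraph formed by the scale-$\ell$ segments together with a minimal ``spine'' of lower-scale segments needed to make it connected. The nested hierarchical interleaving of the scales ensures that $X_\ell$ and $X_{\ell'}$ touch for every pair $\ell \neq \ell'$. Because the segment set at each scale is disjoint from the other scales by construction, and because the spines are chosen to overlap only at a bounded number of interface vertices, a hitting set of $\{X_1,\dots,X_d\}$ must contain at least one vertex per scale, giving a bramble of order $\Omega(d) = \Omega(\log n)$, hence treewidth $\Omega(\log n)$.

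The main obstacle is the combinatorial/geometric design: the doubling pattern across scales is in tension with the restriction to only two segment orientations grounded on a single curve $C$, because long outer-scale segments can easily sweep across many inner-scale segments and blow up the edge count. Carefully chosen segment lengths and well-separated ground-point arcs for each scale (so that an inner scale sits in a small ``pocket'' that only the corresponding outer-scale interface segments can reach) are what make the constant arboricity and the pairwise-disjoint support of the bramble work simultaneously.
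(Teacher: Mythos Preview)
Your high-level plan---a doubling hierarchy of $\Theta(\log n)$ ``scales,'' each using $O(2^\ell)$ segments with only $O(1)$ crossings per segment, then extracting one touching subgraph per scale---is exactly the shape of the paper's construction, so the strategy is sound. But two pieces of your outline do not yet close.

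\textbf{The bramble step is not a lower bound as written.} You define $X_\ell$ as the scale-$\ell$ segments \emph{together with a spine of lower-scale segments} to force connectivity. Once the $X_\ell$'s share spine vertices, ``the segment set at each scale is disjoint'' no longer controls the hitting number: a handful of shared spine vertices could hit every $X_\ell$, and saying the spines ``overlap only at a bounded number of interface vertices'' does not rule that out. The clean fix (which the paper uses) is to make each scale \emph{internally connected} so that no spine is needed; then the $X_\ell$'s are pairwise vertex-disjoint and pairwise touching, which is literally a $K_{\log n}$ minor, giving treewidth $\Omega(\log n)$ immediately.

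\textbf{The geometry is asserted, not built.} You correctly identify the tension---long outer-scale segments sweeping across many inner-scale segments---but do not resolve it, and the ``pockets reachable only by interface segments'' idea is not obviously realizable with horizontal/vertical grounded segments. The paper sidesteps both issues at once with a single concrete device: at scale $i$ it lays down a \emph{zig-zag} (``folk'') of $n/2^i$ V-shapes of width $2^i$, all sharing the same two slopes (hence 2-directional). Each folk is a path in the intersection graph, so each scale is connected without spines; the ground points of different scales are at distinct $x$-coordinates; and, crucially, a segment of $F_i$ meets at most one segment of $\bigcup_{j>i}F_j$, which gives the $O(1)$ arboricity bound directly. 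You should either supply a construction at this level of explicitness or adopt the zig-zag idea.
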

\begin{proof}
    Let $n=2^m$. 
    We consider the $x$-axis as the ground, and the region lying above the $x$-axis 
    as the interior of the ground. 
    A \emph{V-shape curve} is the union of two equal-length outersegments 
    sharing a common ground point, with the other endpoints having the same $y$-coordinates. Note that the angle of the two outersegments is $\pi/3$. 
    The width of a V-shaped curve is defined as the length of each outersegment.   
    A \emph{folk} is the union of equal-width V-shaped curves 
    which form a simple curve. 
    See Figure~\ref{fig:lowerbound}(a). 
    The \emph{size} of folk is the number of V-shaped curves in the fork, and the \emph{width} 
    of a folk is the size of each V-shaped curve in the folk. 
    For an index $i$ with $1\leq i \leq m$, 
    let $F_i$ be the folk of width $2^{i}$ and size $n/2^{i}$ whose leftmost point lies on the $y$-axis. 
    Then we set $\Gamma$ as the set of all outersegments of the folks $F_1,F_2,\ldots,F_m$. See Figure~\ref{fig:lowerbound}(b).
    Notice that
    the ground points of 
    the outersegments of $F_i$ have $x$-coordinates $2^{i-1}+2^{i}\cdot j$ for  integers $j$ with $0\leq j<n/2^i-1$. 
    Therefore, no two segments from different folks share their ground points. 

    Its intersection graph $G$ is clearly a 2-directional outersegment graph.
    The number of vertices of $G$ is $\Theta(n)$
    since the number of segments of $F_i$ is $2\cdot (n/2^{i})$. 
    The arboricity of $G$ is $O(1)$ since each segment of $F_i$ intersects
    at most one segment of $F_{i+1}\cup F_{i+2}\cup\ldots \cup F_m$ for any index $i$ with $1\leq i\leq m$. Then the number of edges of any subgraph of $G$ is linear in the number of its vertices, and thus the arboricity of $G$ is $O(1)$.
    Consider the minor of $G$ obtained by contracting all
    edges coming from $F_i$ for each index $i$ with $1\leq i\leq m$.
    Since 
    the folks are pairwise intersecting, the resulting minor is a clique of size $m=\log n$. 
    Since 
    the treewidth of $G$ is at least the treewidth of any of its minors,
    the treewidth of $G$ is $\Omega(\log n)$.
 \end{proof}

\begin{figure}
	\centering
	\includegraphics[width=0.80\textwidth]{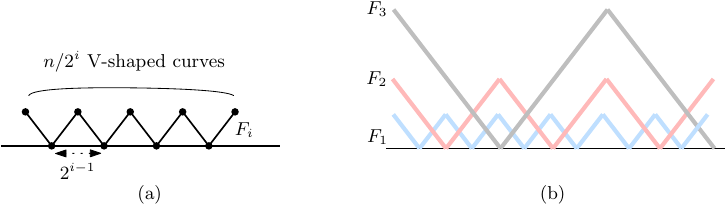}
	\caption{\small (a) A folk of width $2^{i-1}$ and size $n/2^i$.
 (b) Illustration of $F_1, F_2$ and $F_3$. }
		\label{fig:lowerbound}
\end{figure}    

Notice that this example is $2$-biclique-free. 
We can generalize this example to an outersegment graph with an arbitrary arboricity $\alpha$ whose treewidth is $\Theta(\alpha\log (n/\alpha))$ by simply copying all segments $\alpha$ times.
Notice that we can slightly perturb the copied segments without changing their intersection graphs so that any two segments intersect at most once. 

\begin{restatable}{lemma}{lowerboundgeneral}\label{lem:tw-lowerbound}
    For any integer $\alpha$, there is an outersegment graph with arboricity $\alpha$ and treewidth 
    $\Omega(\alpha\log (n/\alpha))$, where $n$ denotes the number of its vertices.
\end{restatable}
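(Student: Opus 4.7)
The plan is to lift the construction of Lemma~\ref{lem:lowerbound-constant} by replicating every segment $\alpha$ times. I would first invoke that lemma with parameter $n/\alpha$ to obtain a 2-directional outersegment graph $G_{0}$ with $\Theta(n/\alpha)$ vertices, arboricity $O(1)$, and treewidth $\Omega(\log(n/\alpha))$; let $\Gamma_{0}$ be its geometric representation and let $F_{1},\dots,F_{m}$ be its $m=\Theta(\log(n/\alpha))$ folks. Then, for each segment $s\in\Gamma_{0}$, introduce $\alpha$ copies of $s$ obtained by infinitesimal distinct rotations about the ground point of $s$, producing a new collection $\Gamma$ of $\Theta(n)$ grounded segments. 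A sufficiently small rotation preserves every transversal crossing of $\Gamma_{0}$ (since transversal intersection is an open condition) and creates intersections among copies of a common original at their shared ground point; a further infinitesimal perturbation of ground points restores general position without altering the intersection graph. Call the resulting outersegment graph $G$.

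For the arboricity upper bound I would argue by averaging. Fix a subgraph $H\subseteq G$ and let $V'$ denote the set of originals in $\Gamma_{0}$ whose copies appear in $V(H)$; then $|V(H)|\le\alpha|V'|$. The edges of $H$ decompose into intra-cluster edges, contributing at most $\binom{\alpha}{2}|V'|$ edges in total, and inter-cluster edges between copies of two distinct originals, contributing at most $\alpha^{2}|E(G_{0}[V'])|=O(\alpha^{2}|V'|)$ edges since $G_{0}$ has arboricity $O(1)$. Summing gives $|E(H)|=O(\alpha^{2}|V'|)=O(\alpha|V(H)|)$, so $G$ has arboricity $O(\alpha)$ as required.

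For the treewidth lower bound I would exhibit an explicit clique minor of size $\alpha m=\Omega(\alpha\log(n/\alpha))$, which yields the desired bound on treewidth. For each $i\in\{1,\dots,m\}$ and each copy index $j\in\{1,\dots,\alpha\}$, define the branch set $F_{i}^{(j)}$ to be the set of $j$-th copies of all segments in $F_{i}$. Each $F_{i}^{(j)}$ is connected because every intersection that made $F_{i}$ connected in $G_{0}$ is transversal (after a harmless extension of each segment past its nominal endpoint inside $F_{i}$) and hence survives the perturbation. For $i\neq i'$, the folks $F_{i}$ and $F_{i'}$ cross transversally in $\Gamma_{0}$, so every copy of any crossing segment of $F_{i}$ still meets every copy of the corresponding crossing segment of $F_{i'}$; thus $F_{i}^{(j)}$ and $F_{i'}^{(j')}$ are adjacent for all $j,j'$. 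For $i=i'$ and $j\neq j'$, the copies of any single segment of $F_{i}$ share the common ground point, so $F_{i}^{(j)}$ and $F_{i}^{(j')}$ are adjacent. This gives a $K_{\alpha m}$-minor in $G$ and hence treewidth $\Omega(\alpha\log(n/\alpha))$.

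The main point requiring care is the perturbation bookkeeping: ensuring that copies of the same segment pairwise intersect, copies of intersecting originals still intersect, copies of non-intersecting originals remain disjoint, and general position is achieved. All four conditions are open or can be enforced by arbitrarily small perturbations, which is exactly the perturbation argument alluded to immediately above the statement of the lemma, so no new combinatorial difficulty arises beyond Lemma~\ref{lem:lowerbound-constant}.
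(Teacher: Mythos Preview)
Your plan is exactly the paper's: take the construction of Lemma~\ref{lem:lowerbound-constant}, replace each segment by $\alpha$ nearly coincident copies, and read off a $K_{\alpha m}$-minor on the copied folks $F_i^{(j)}$. The perturbation and clique-minor parts are fine and match the paper (which, incidentally, simply asserts the arboricity bound without argument).

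Your arboricity calculation, however, contains a real slip. From $|V(H)|\le\alpha|V'|$ you cannot conclude $O(\alpha^{2}|V'|)=O(\alpha|V(H)|)$; the inequality points the wrong way. Concretely, if $H$ consists of exactly one copy of each original then $|V'|=|V(H)|$, and your bound only gives $|E(H)|=O(\alpha^{2}|V(H)|)$, which would yield arboricity $O(\alpha^{2})$ rather than $O(\alpha)$. The crude estimate $k_u k_v\le\alpha^{2}$ on every inter-cluster pair is simply too lossy. A clean fix: since $G_0$ has arboricity $O(1)$ it is $d$-degenerate for some $d=O(1)$, so orient the edges of $G_0$ with all out-degrees at most $d$. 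Then every vertex of $G$ has at most $d\alpha$ neighbours among copies of its out-neighbours and at most $\alpha-1$ neighbours inside its own cluster, whence $G$ is $((d+1)\alpha-1)$-degenerate and has arboricity $O(\alpha)$.
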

\begin{proof}
Let $\Gamma$ be the set of outersegments we constructed in Lemma~\ref{lem:lowerbound-constant}. 
We copy each outersegment of $\Gamma$ into $\alpha$ copies. Then these $\alpha$ curves are pairwise intersecting.
Then the intersection graph of the resulting outersegments has arboricity $\alpha$,
and it has $\alpha n$ vertices. 
Moreover, the resulting set $\Gamma$ can be decomposed into $\alpha \log n$ pairwise intersecting folks, and they form a clique minor of size $\alpha\log n$.
Therefore, the treewidth of the resulting graph is $\alpha \log (N/\alpha)$, where $N$ denotes the number of vertices of the resulting graph. 
\end{proof}

\section{Algorithmic Applications} \label{sec:application}
In this section, we present algorithmic applications of Corollary~\ref{cor:biclique-free}.
Immediate consequences of  Corollary~\ref{cor:biclique-free} are as follows.
We compute a tree decomposition of width $2\cdot tw$ in $2^{O(tw)}n$ time~\cite{korhonen2023single}. 
Then 
lots of central NP-hard problems can be solved in $2^{O(tw)}n^{O(1)}$ time~\cite{cygan2015parameterized} including the problems mentioned in the following corollary (except for \textsc{Coloring}). 
The definitions of the problems are as follows. 
We are given a graph $G$. Then 
\begin{itemize}\setlength\itemsep{-0.1em}
\item \textsc{Independent Set} asks for a maximum-sized set of pairwise non-adjacent vertices.
\item \textsc{Hamiltonian Cycle} asks for a cycle that visits all the vertices of the graph.
\item \textsc{Vertex Cover} asks for a minimum-sized set of vertices that cover all the edges. 
\item \textsc{Dominating Set} asks for a minimum-sized set $D$ 
of vertices such that every other vertex is adjacent to $D$. 
\item \textsc{Feedback Vertex Set} asks for a minimum-sized vertex set
whose removal makes $G$ acyclic. 
\item \textsc{Coloring} asks for a coloring of the vertices of $G$ with the smallest number of colors. 
\item \textsc{Maximum Induced Matching} asks for a maximum-sized edge set $S$ such that no two such edges are joined by an edge of $G$. 
\item \textsc{List 3-Coloring} asks for a function $\bar f: V(G)\rightarrow \{1,2,3\}$
such that
$\bar f(v)\in f(v)$ for all vertices $v$ of $V(G)$, and $\bar f(u)\neq \bar f(v)$ for
all edges $(u,v)$ of $G$. An input of this problem consists of a graph $G$ and a function $f: V(G)\rightarrow 2^{\{1,2,3\}}$. 
\item \textsc{Cycle Packing} asks for a maximum-sized set of vertex-disjoint cycles of $G$. 
\item \textsc{Minimum Clique Cover} asks for a partition of $V(G)$ into a smallest number of cliques. 
\item \textsc{Maximum Clique} asks for a set of pairwise adjacent vertices.
\item \textsc{Recognition} asks for checking if $G$ is an outerstring graph.
\end{itemize}

\begin{restatable}{corollary}{poly}\label{cor:poly}
\textsc{Independent Set}, 
\textsc{Hamiltonian Cycle},
\textsc{Dominating Set}, 
\textsc{Feedback Vertex Set}
and
\textsc{Coloring}
can be solved in polynomial time on $t$-biclique-free outerstring graphs for a constant $t$.
\end{restatable}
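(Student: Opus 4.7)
The plan is to turn Corollary~\ref{cor:biclique-free} into polynomial-time algorithms via the standard pipeline: compute a low-width tree decomposition and run known treewidth-based dynamic programs. For a fixed constant $t$, Corollary~\ref{cor:biclique-free} yields that every $t$-biclique-free outerstring graph $G$ on $n$ vertices has treewidth $tw = O(\log n)$. I would first invoke Korhonen's algorithm~\cite{korhonen2023single} to construct a tree decomposition of width at most $2\cdot tw = O(\log n)$ in time $2^{O(tw)} n = n^{O(1)}$.

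For \textsc{Independent Set}, \textsc{Hamiltonian Cycle}, \textsc{Dominating Set}, and \textsc{Feedback Vertex Set}, the classical single-exponential DPs on tree decompositions (see~\cite{cygan2015parameterized}) run in $2^{O(tw)} n^{O(1)}$ time; substituting $tw = O(\log n)$ turns every exponential factor $2^{O(tw)}$ into $n^{O(1)}$, so the overall running time is polynomial.

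The main obstacle is \textsc{Coloring}, which is the reason the preceding prose of the section explicitly excludes it: the problem is $W[1]$-hard parameterized by treewidth, so one cannot simply plug in a $2^{O(tw)}n^{O(1)}$ DP. The standard DP for $q$-\textsc{Coloring}, however, runs in $q^{O(tw)} n^{O(1)}$ time, so it suffices to prove that $\chi(G)$ is bounded by a constant depending only on $t$. I would argue this in two short steps. First, since $K_{2t}$ contains $K_{t,t}$ as a subgraph, any $K_{t,t}$-free graph has clique number at most $2t-1$, which is $O(1)$ for constant $t$. Second, outerstring graphs are $\chi$-bounded (\cite{rok2019outerstring}), so $\chi(G) \leq f(\omega(G))$ for some function $f$, giving $\chi(G) \leq f(2t-1) = O(1)$.

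With this bound in hand, I would run the $q$-\textsc{Coloring} DP on the tree decomposition for each $q \in \{1, 2, \dots, f(2t-1)\}$ and return the smallest $q$ for which a proper coloring exists. Each call takes $q^{O(tw)} n^{O(1)} = q^{O(\log n)} n^{O(1)} = n^{O(\log q)} n^{O(1)}$ time, which is $n^{O(1)}$ because $q$ is bounded by a constant. Combining with the previous paragraph completes the proof.
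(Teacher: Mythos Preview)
Your proposal is correct and essentially mirrors the paper's proof for \textsc{Independent Set}, \textsc{Hamiltonian Cycle}, \textsc{Dominating Set}, and \textsc{Feedback Vertex Set}: invoke Corollary~\ref{cor:biclique-free}, build a width-$O(\log n)$ tree decomposition via~\cite{korhonen2023single}, and run the standard $2^{O(tw)}n^{O(1)}$ dynamic programs.

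For \textsc{Coloring} the overall structure is also the same---show $\chi(G)=O_t(1)$ and then run the $q$-\textsc{Coloring} DP for constantly many values of $q$---but the way you bound $\chi(G)$ differs. The paper uses Lemma~\ref{lem:sparsity} directly: a $t$-biclique-free outerstring graph is $O(t\log t)$-degenerate, hence $O(t\log t)$-colorable by the greedy degeneracy argument. You instead combine the bound $\omega(G)\le 2t-1$ with the $\chi$-boundedness of outerstring graphs from~\cite{rok2019outerstring}. Both are valid; the paper's route is more self-contained (it reuses a lemma already proved in the paper) and yields the explicit bound $\chi(G)=O(t\log t)$, whereas your route imports a substantially deeper external result and gives only $\chi(G)\le f(2t-1)$ for the Rok--Walczak $\chi$-bounding function $f$. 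Either way the resulting DP runs in $q^{O(tw)}n^{O(1)}=n^{O(1)}$ time since $q$ is a constant.
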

\begin{proof}
All problems mentioned above, except for \textsc{Coloring}, admit polynomial-time algorithms
for graphs with logarithmic treewidth. More specifically, they are based on dynamic programming over a tree decomposition of width $tw=O(\log n)$, and they run in $2^{O(tw)}n^{O(1)}=n^{O(1)}$ time in our case.
For details, see~\cite[Chapter 7.3 and Chapter 11.2]{cygan2015parameterized}. 

In the case of \textsc{Coloring}, recall that a $t$-biclique-free outerstring graph is $O(t\log t)$-degenerate by Lemma~\ref{lem:sparsity}. Thus it is $O(t\log t)$-colorable. To see this, consider the following constructive argument. In each iteration, we find a minimum degree vertex $v$ and compute a minimum coloring of $G-\{v\}$ recursively. Then since the degree of $v$ is $O(t\log t)$, we can color $v$
using a color not used by any neighbor of $v$ once we have $\Omega(t\log t)$ colors. 
This implies that $G$ is $O(t\log t)$-colorable.
For a fixed constant $q$, we can compute a coloring of a graph with treewidth $tw$ with $q$ colors
in $2^{O(tw)}n^{O(1)}$ time~\cite[Chapter 7.3]{cygan2015parameterized}. Therefore,
we can solve \textsc{Coloring} in polynomial time in the case that $t$ is a constant. 
\end{proof}


\paragraph{Subexponential-time FPT algorithms for (not necessarily sparse) outerstring graphs.}
We can obtain subexponential-time FPT algorithms for \textsc{Vertex Cover} and \textsc{Feedback Vertex Set} on general outerstring graphs.
To do this, we first compute the polynomial kernel with respect to the solution size $k$. Then we compute biclique of size $t=\sqrt k$ in a brute-force manner, and then apply branching with respect to the partial solution in the biclique. 

\begin{restatable}{corollary}{vc}\label{lem:vc}
\textsc{Vertex Cover} parameterized by the solution size $k$ can be solved in $2^{O(\sqrt k\log^2 k)}n^{O(1)}$ time for (not necessarily sparse) outerstring graphs. 
\end{restatable}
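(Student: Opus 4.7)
The plan is to combine kernelization with a branching strategy that either extracts large bicliques (forcing many vertices into the cover) or reduces to a biclique-free instance where Corollary~\ref{cor:biclique-free} gives a small treewidth, so that standard treewidth-based dynamic programming finishes the job in the budgeted time.

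First, I would apply a classical polynomial kernel for \textsc{Vertex Cover} (for instance the Buss kernel: repeatedly put vertices of degree greater than $k$ into the cover and discard isolated vertices; if the residual graph has more than $k^2$ edges answer \textsc{No}). This reduces the input, in polynomial time, to an equivalent instance $(G',k')$ with $k'\le k$ and $|V(G')|=O(k^2)$. Since the reductions only delete vertices and take induced subgraphs, $G'$ is still an outerstring graph (and $\log|V(G')|=O(\log k)$), which is what later steps require.

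Next, set $t=\lceil\sqrt{k}\rceil$ and run the following branching routine on $G'$. In time $\binom{|V(G')|}{t}^2\cdot\text{poly}(t)=k^{O(\sqrt{k})}=2^{O(\sqrt{k}\log k)}$, search exhaustively for a (not necessarily induced) $K_{t,t}$ subgraph with bipartition $(A,B)$. If one exists, observe that every edge of $K_{t,t}$ must be covered, hence either $A\subseteq S$ or $B\subseteq S$ for any vertex cover $S$; branch into the two cases, in each branch removing the chosen side from $G'$ and decreasing the parameter by $t$. Each branch commits $t=\sqrt{k}$ vertices to the cover, so the branching tree has depth at most $k/t=\sqrt{k}$ and at most $2^{\sqrt{k}}$ leaves; the total time spent on biclique searches is $2^{\sqrt{k}}\cdot 2^{O(\sqrt{k}\log k)}=2^{O(\sqrt{k}\log k)}$.

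At every leaf of the branching tree the current graph is a $t$-biclique-free outerstring graph on $O(k^2)$ vertices, so by Corollary~\ref{cor:biclique-free} its treewidth is
\[
O\bigl(t(\log t)\log|V(G')|\bigr)=O\bigl(\sqrt{k}\,\log k\cdot\log k\bigr)=O(\sqrt{k}\log^2 k).
\]
I would then compute a tree decomposition of width $O(\sqrt{k}\log^2 k)$ in $2^{O(tw)}\cdot n$ time using the single-exponential algorithm of~\cite{korhonen2023single}, and finish with the standard $2^{O(tw)}\cdot n^{O(1)}$ dynamic program for \textsc{Vertex Cover}. Summing over the $2^{\sqrt{k}}$ leaves yields total running time $2^{O(\sqrt{k}\log^2 k)}\cdot n^{O(1)}$, matching the statement.

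The main delicate point I expect is keeping all the quantities coherent: one must (i) use a kernel that preserves being an outerstring graph (induced-subgraph-type reductions suffice), (ii) argue that $\log n$ after kernelization contributes only a $\log k$ factor so that Corollary~\ref{cor:biclique-free} yields $tw=O(\sqrt{k}\log^2 k)$ rather than something larger, and (iii) verify that the two-way branching rule is correct, i.e.\ that a $K_{t,t}$ truly forces one entire side into every cover. Everything else is routine, and the bottleneck $2^{O(\sqrt{k}\log^2 k)}$ comes precisely from the treewidth dynamic program at the leaves rather than from the branching or biclique-search cost.
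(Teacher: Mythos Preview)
Your proposal is correct and follows essentially the same approach as the paper: kernelize, then branch on large bicliques until the instance is $\sqrt{k}$-biclique-free, and finish with treewidth DP via Corollary~\ref{cor:biclique-free}. The only cosmetic difference is that the paper invokes a linear-size ($O(k)$-vertex) kernel while you use the $O(k^2)$-vertex Buss kernel, but since only $\log(\text{kernel size})$ enters the treewidth bound this makes no difference to the final $2^{O(\sqrt{k}\log^2 k)}n^{O(1)}$ running time.
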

\begin{proof}
    Let $G$ be an outerstring graph with $n$ vertices.
    \textsc{Vertex Cover} admits a kernel of size $k$~\cite{cygan2015parameterized},
    that is, there is an instance $(H,k')$ consisting of an induced minor $H$ of $G$
    with $O(k)$ vertices
    and an integer $k'\leq k$ such that
    $H$ has a vertex cover of size $k'$ if and only if $G$ has 
    a vertex cover of size $k$.
    Thus it is sufficient to solve \textsc{Vertex Cover} for $(H,k')$. Notice that $H$ is also an outerstring graph. 

    Let $t=\sqrt{k}$. Given an instance $(H,k')$, we first compute a biclique of size $t$ in $H$ in a brute-force fashion if it exists. This takes $k^t=2^{O(\sqrt k\log k)}$ time. Let $A,B$ be the sets of $V(H)$ which form a biclique of size $t$. Notice that a vertex cover of $H$ must contain either $A$ or $B$. We branch on whether $A$ is contained in an optimal vertex cover, or $B$ is contained in an optimal vertex cover.
    For this, we produce two instances $(H-A, k'-t)$ and $(H-B, k'-t)$.
    If $H$ does not contain a biclique of size $t$, we apply a polynomial-time algorithm for computing a minimum vertex cover for a $t$-biclique-free outerstring graph. 

    The number of instances we produce is $2^{O(k/t)}$, which is $2^{O(\sqrt k)}$.
    For each instance, we compute a biclique of size $t$ in $2^{O(\sqrt k\log k)}$ time. If it does not exist, we compute a minimum vertex cover in $O(2^{t(\log t)\log k})=2^{O(\sqrt k\log^2 k)}$ time.
\end{proof}

\begin{restatable}{corollary}{fvs}\label{lem:fvs}
\textsc{Feedback Vertex Set} parameterized by the solution size $k$ can be solved in $2^{O(\sqrt k\log^2 k)}n^{O(1)}$ time for (not necessarily sparse) outerstring graphs. 
\end{restatable}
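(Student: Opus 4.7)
The plan is to mirror the proof of Corollary~\ref{lem:vc} for \textsc{Vertex Cover}, with the branching step replaced by one tailored to \textsc{Feedback Vertex Set}. First I would apply Thomass\'e's quadratic kernel for \textsc{Feedback Vertex Set}, which reduces any instance $(G,k)$ to an equivalent instance $(H,k')$ with $|V(H)| = O(k^2)$ and $k' \leq k$. The kernelization uses only vertex deletions and the ``bypass'' rule (suppressing a degree-two vertex by contracting one of its incident edges), so $H$ is an induced minor of $G$; by the remarks in the preliminaries, $H$ is therefore again an outerstring graph, and it suffices to solve the problem on $(H,k')$.

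Next, set $t = \lceil \sqrt{k}\, \rceil$. In time $|V(H)|^{O(t)} = 2^{O(\sqrt k \log k)}$ I would brute-force search for a subgraph of $H$ isomorphic to $K_{t,t}$. If such a biclique with parts $A,B$ of size $t$ is found, the key combinatorial observation is that any feedback vertex set $S$ must leave $(A\cup B)\setminus S$ acyclic; since $K_{|A\setminus S|,|B\setminus S|}$ is a subgraph of $H - S$ and a complete bipartite graph is acyclic only when one side has at most one vertex, we must have $|A \cap S| \geq t-1$ or $|B \cap S| \geq t-1$. I would branch accordingly into $2(t+1)$ sub-instances: for each $v \in A$, recurse on $(H - (A\setminus\{v\}),\, k' - (t-1))$; symmetrically for each $v \in B$; plus the two extreme branches $(H - A,\, k'-t)$ and $(H - B,\, k'-t)$. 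Each recursive call decreases the budget by at least $t-1$, so the recursion tree has depth $O(k/t) = O(\sqrt k)$ and at most $(2t+2)^{O(\sqrt k)} = 2^{O(\sqrt k \log k)}$ leaves.

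At every leaf the current outerstring graph contains no $K_{t,t}$ as a subgraph, so by Corollary~\ref{cor:biclique-free} it has treewidth $O(t(\log t)\log|V(H)|) = O(\sqrt k \log^2 k)$. I would compute a tree decomposition of width $O(tw)$ in $2^{O(tw)}|V(H)|$ time with Korhonen's single-exponential algorithm and then run the standard $3^{tw}|V(H)|^{O(1)}$ dynamic program for \textsc{Feedback Vertex Set} of Cygan et al., giving $2^{O(\sqrt k \log^2 k)}$ time per leaf. Multiplying by the number of leaves and the biclique-search cost per node, and adding a factor of $n^{O(1)}$ for the kernelization, yields the target bound $2^{O(\sqrt k \log^2 k)} n^{O(1)}$. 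The main obstacle I foresee is confirming that the FVS kernel truly preserves outerstring-ness: this reduces to checking that every Thomass\'e reduction rule (degree-$\leq 1$ removal, degree-two bypass, self-loop and parallel-edge handling, and the flower-based rule) can be realized by vertex deletions and edge contractions only, i.e., as induced-minor operations.
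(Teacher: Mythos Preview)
Your proposal is correct and follows essentially the same route as the paper: quadratic FVS kernel (yielding an induced minor, hence still outerstring), brute-force biclique detection with $t=\Theta(\sqrt{k})$, the observation that any feedback vertex set must contain all but one vertex of one side of the $K_{t,t}$, and a $2^{O(\sqrt{k}\log^2 k)}$ treewidth DP at the leaves via Corollary~\ref{cor:biclique-free}. Your two ``extreme'' branches $(H-A,k'-t)$ and $(H-B,k'-t)$ are redundant (they are subsumed by the $2t$ branches that delete $A\setminus\{v\}$ or $B\setminus\{v\}$, since those branches do not commit $v$ to lie outside the solution), and the paper, like you, simply asserts that the kernel is an induced minor without spelling out the rule-by-rule verification you flag as an obstacle.
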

\begin{proof}
     Let $G$ be an outerstring graph with $n$ vertices.
    \textsc{Feedback Vertex Set} admits a kernel of size $k^2$~\cite{cygan2015parameterized}.
    More specifically, there is an instance $(H,k')$ consisting of an induced minor $H$ of $G$ 
    with $O(k^2)$ vertices
    and an integer $k'\leq k$ such that
    $H$ has a feedback vertex set of size $k'$ if and only if $G$ has 
    a feedback vertex set of size $k$.
    Thus it is sufficient to solve \textsc{Feedback Vertex Set} for $(H,k')$. Notice that $H$ is also an outerstring graph. 

    Let $t=\sqrt{k}$. Given an instance $(H,k')$, we first compute a biclique of size $t$ in $H$ in a brute-force fashion if it exists. This takes $|V(H)|^{2t}=2^{O(\sqrt k\log k)}$ time. Let $A,B$ be the sets of $V(H)$ which form a biclique of size $t$. Notice that a feedback vertex set of $H$ must contain either all but one vertex of $A$ or all but one vertex of $B$. In this case, we produce $2t$ instances for branching on which 
    vertices of $A\cup B$ are contained in an optimal feedback vertex set. 
    For this, we produce 
    $t$ instances $(H\setminus(A\setminus\{v\}), k'-(t-1))$ for all vertices $v\in A$ and
    $t$ instances $(H\setminus(B\setminus\{v'\}), k'-(t-1))$ for all vertices $v'\in B$. 
    If $H$ does not contain a biclique of size $t$, we apply a polynomial-time algorithm for computing a minimum feedback vertex set for a $t$-biclique-free outerstring graph. 

    The number of instances we produce is $(2t)^{O(k/t)}$, which is $2^{O(\sqrt k\log k)}$.
    For each instance, we compute a biclique of size $t$ in $2^{O(\sqrt k\log k)}$ time. If it does not exist, we compute a minimum feedback vertex set in $O(2^{t(\log t)\log k})=2^{O(\sqrt k\log^2 k)}$ time.   
\end{proof}


\paragraph{Subexponential-time algorithms for (not necessarily sparse) outerstring graphs.}
Apart from Corollary~\ref{cor:biclique-free}, the main tools we use for \textsc{Vertex Cover} and \textsc{Feedback Vertex Set} are branching and kernelization.
For the following two problems, we are not aware of any polynomial kernel
although we can apply branching.
In this case, we can obtain subexponential-time (non-parameterized) algorithms for those problems.

\begin{restatable}{corollary}{mat}\label{cor:matching}
\textsc{Maximum Induced Matching} on outerstring graphs with $n$ vertices can be solved in $2^{O(\sqrt n \log^2 n)}$ time. 
\end{restatable}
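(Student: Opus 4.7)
The plan is to combine branching on large bicliques with the treewidth-based dynamic programming provided by Corollary~\ref{cor:biclique-free}, in the same spirit as the algorithms for \textsc{Vertex Cover} and \textsc{Feedback Vertex Set}, but without appealing to a polynomial kernel. Let $t = \lceil\sqrt{n}\rceil$. First, I would attempt to find a $K_{t,t}$ subgraph (not necessarily induced) in $G$ by brute-force enumeration over disjoint $t$-vertex pairs, which takes $n^{O(t)} = 2^{O(\sqrt n \log n)}$ time.

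If no such biclique exists, then $G$ is $t$-biclique-free, so by Corollary~\ref{cor:biclique-free} its treewidth is $O(\sqrt n \log^2 n)$. I would compute a tree decomposition of this width via~\cite{korhonen2023single} and solve \textsc{Maximum Induced Matching} by the standard $2^{O(tw)} n^{O(1)}$-time dynamic program; the states at each bag record, for each vertex in the bag, whether it lies in $V(M)$ and, if so, which bag vertex (or an ``outside'' placeholder) is its matching partner. The total time fits within $2^{O(\sqrt n \log^2 n)}$.

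Otherwise, with a biclique $K_{t,t}$ on parts $A,B$ in hand, the structural observation I would use is that for any induced matching $M$ of $G$, at least one of the following holds:
\begin{itemize}
\item[(i)] $V(M) \cap A = \emptyset$;
\item[(ii)] $V(M) \cap B = \emptyset$;
\item[(iii)] there exist $a\in A$ and $b\in B$ with $\{a,b\}\in M$ and $V(M) \cap (A \cup B) = \{a,b\}$.
\end{itemize}
Indeed, if both intersections are nonempty, pick $a \in V(M)\cap A$ and $b \in V(M)\cap B$ with matching edges $e_a, e_b \in M$. The biclique gives $\{a,b\} \in E$; if $e_a \neq e_b$ this would be a forbidden edge between two distinct $M$-edges, so $e_a = e_b = \{a,b\}$. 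Any third vertex $u \in (A\cup B) \cap V(M) \setminus \{a,b\}$ would similarly produce a biclique edge to $\{a,b\}$ that violates induced-ness, so (iii) must hold.

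I then branch accordingly: case (i) deletes $A$ and recurses, case (ii) deletes $B$ and recurses, and case (iii) enumerates the $t^2$ choices of $(a,b)$, commits $\{a,b\}$ to $M$, and recurses on $G - N[\{a,b\}]$, using that $N[\{a,b\}] \supseteq A \cup B$. Every branch shrinks $G$ by at least $\sqrt n$ vertices, with a branching factor of $O(n)$. The resulting recurrence $T(n) \leq O(n)\cdot T(n-\sqrt n) + 2^{O(\sqrt n\log^2 n)}$ unrolls to $T(n) = n^{O(\sqrt n)} \cdot 2^{O(\sqrt n\log^2 n)} = 2^{O(\sqrt n\log^2 n)}$, the desired bound. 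The main technical obstacle is establishing the three-case dichotomy cleanly; once that is in place, the rest is standard branching bookkeeping.
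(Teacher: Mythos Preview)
Your proposal is correct and follows essentially the same approach as the paper: branch on a $K_{t,t}$ with $t=\sqrt n$, using the trichotomy that an induced matching misses $A$, misses $B$, or uses exactly one edge $\{a,b\}$ with $a\in A,\ b\in B$, and fall back on Corollary~\ref{cor:biclique-free} when no such biclique exists. The only cosmetic differences are that the paper phrases the structural fact as ``at most one edge of $M$ is incident to $A\cup B$'' (your three-case statement is the more precise version), and in case~(iii) the paper recurses on $(G\setminus(A\cup B))\cup\{u,v\}$ rather than committing $\{a,b\}$ and deleting $N[\{a,b\}]$; both variants remove at least $t$ vertices per branch and yield the same bound.
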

\begin{proof}
      Let $G$ be an outerstring graph with $n$ vertices. Let $t=\sqrt n$. 
     We first compute a biclique of size $t$ in $H$ in a brute-force fashion if it exists. This takes $n^t=2^{O(\sqrt n\log n)}$ time. Let $A,B$ be the sets of $V(G)$ which form a biclique of size $t$. Notice that 
     for any induced matching $M$ of $G$,
     at most one edge of $M$ is incident to $A\cup B$. 
     We produce $t^2$ instances for branching on which 
    vertices of $A\cup B$ are involved in a maximum induced matching. 
    For this, we produce 
    two instances $G\setminus A$ and $G\setminus B$,
    and $t^2$ instances $(G\setminus (A\cup B))\cup \{u,v\}$ for all pairs $(u,v)$ with
    $u\in A$ and $v\in B$. 
    If $G$ does not contain a biclique of size $t$, we apply a $2^{O(tw)}n^{O(1)}$-time algorithm for computing a maximum-cardinality induced matching for a $t$-biclique-free outerstring graph.

    The number of instances we produce is $(t^2)^{O(n/t)}$, which is $2^{O(\sqrt n\log n)}$.
    For each instance, we compute a biclique of size $t$ in $2^{O(\sqrt n\log n)}$ time. If it does not exist, we compute a maximum induced matching in $O(2^{t(\log t)\log n})=2^{O(\sqrt n\log^2 n)}$ time. 
\end{proof}

\begin{restatable}{corollary}{listcolor}\label{cor:listcolor}
    \textsc{List 3-Coloring} on outerstring graphs with $n$ vertices can be solved in $2^{O(\sqrt n \log^2 n)}$ time.
\end{restatable}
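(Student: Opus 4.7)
The plan follows the branching-plus-treewidth template used in Corollaries~\ref{lem:vc}, \ref{lem:fvs}, and \ref{cor:matching}, with a biclique-exploiting step tailored to 3-coloring. The key structural observation is that whenever $G$ contains $K_{t,t}$ as a (not necessarily induced) subgraph with sides $A$ and $B$ and $t\ge 2$, any proper 3-coloring of $G$ must assign to $A$ and $B$ disjoint non-empty subsets of $\{1,2,3\}$, since every vertex of $A$ is adjacent to every vertex of $B$. Writing $S_A$ and $S_B$ for the color sets used on $A$ and $B$, we have $|S_A|+|S_B|\le 3$ with both non-empty, so at least one of $S_A$, $S_B$ is a singleton. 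Hence in every feasible list 3-coloring either all of $A$ takes a single color $c\in\{1,2,3\}$ or all of $B$ does, yielding at most six branching cases.

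Concretely, fix $t=\lceil\sqrt n\rceil$. Given an instance $(G,f)$, first search in brute force for a $K_{t,t}$ subgraph of $G$; this costs $n^{O(t)}=2^{O(\sqrt n\log n)}$ time. If such a biclique is found with sides $A$ and $B$, branch into six subinstances indexed by a choice of side $S\in\{A,B\}$ and color $c\in\{1,2,3\}$: if $c\in f(v)$ for every $v\in S$, form the reduced instance $(G-S,f')$ where $f'(u)=f(u)\setminus\{c\}$ for every $u\in V(G)\setminus S$ adjacent to $S$ and $f'(u)=f(u)$ otherwise, rejecting the branch if some list becomes empty. If instead no such biclique exists, then $G$ is $t$-biclique-free, so by Corollary~\ref{cor:biclique-free} its treewidth is $O(t(\log t)\log n)=O(\sqrt n\log^2 n)$; compute a tree decomposition of width $O(tw)$ via~\cite{korhonen2023single} and solve \textsc{List 3-Coloring} by the standard $2^{O(tw)}n^{O(1)}$-time dynamic program over the decomposition (tracking, for each bag, the colorings of its vertices consistent with $f$), which evaluates to $2^{O(\sqrt n\log^2 n)}$ time.

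Each branch discards $t$ vertices, so the recursion tree has depth $n/t=O(\sqrt n)$ and branching factor at most $6$, giving $2^{O(\sqrt n)}$ leaves. Summing $2^{O(\sqrt n\log n)}$ work for biclique detection at each internal node and $2^{O(\sqrt n\log^2 n)}$ work for the DP at each leaf yields a total running time of $2^{O(\sqrt n\log^2 n)}$. The main conceptual step is the monochromatic-side observation for bicliques under a 3-color constraint; once that is in place, the argument is a direct transplant of the framework already developed for the surrounding corollaries, and no geometric representation of $G$ is required since both the biclique search and the tree-decomposition-based DP operate on the abstract graph.
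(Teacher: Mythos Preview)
Your proof is correct and follows essentially the same approach as the paper: both identify the key observation that in any proper $3$-coloring of a $K_{t,t}$ one side must be monochromatic (you spell out the disjoint-color-sets argument explicitly), branch into six cases, and fall back to the treewidth DP once the graph is $t$-biclique-free. Your constraint propagation (removing $c$ from the lists of \emph{all} neighbors of the deleted side, not just the opposite side of the biclique) is in fact a bit more careful than the paper's stated reduction, but otherwise the arguments coincide.
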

\begin{proof}
        Let $G$ be an outerstring graph with $n$ vertices, and let $f: V(G)\rightarrow 2^{\{1,2,3\}}$
     be a function. Our goal is to choose one vertex from $f(v)$ for each vertex $v$
     such that for any edge $uv$ of $G$, $f(v)\neq f(u)$. 
     Let $t=\sqrt n$. 
     We first compute a biclique of size $t$ in $H$ in a brute-force fashion if it exists. This takes $n^t=2^{O(\sqrt n\log n)}$ time. Let $A,B$ be the sets of $V(G)$ which form a biclique of size $t$. Notice that 
     for any feasible coloring of $V(G)$,
     either all vertices of $A$ have the same color, or all vertices of $B$ have the same color. This holds since we have only three colors. 
     We produce six instances for branching on which 
    vertices of $A\cup B$ have the same color.
    For this, we produce 
    six instances $(G-A, f_{B\rightarrow i})$ and $(G-B, f_{A\rightarrow i})$
    for $i=1,2,3$, where $f_{A\rightarrow i}$ (and $f_{B\rightarrow i}$) denotes the function that maps
    the vertices $v$ of $V(G)-(A\cup B)$ to $f(v)$ and all vertices $x$ of $B$ (and of $A$) to $f(x)\setminus\{i\}$.
    If $G$ does not contain a biclique of size $t$, we apply a $2^{O(tw)}n^{O(1)}$-time algorithm for computing a list 3-coloring of $G$. 

    The number of instances we produce is $6^{O(n/t)}$, which is $2^{O(\sqrt n)}$.
    For each instance, we compute a biclique of size $t$ in $2^{O(\sqrt n\log n)}$ time. If it does not exist, we compute a list 3-coloring in $O(2^{t(\log t)\log n})=2^{O(\sqrt n\log^2 n)}$ time.   
    In total, the running time of the algorithm is $2^{O(\sqrt n\log^2 n)}$.
\end{proof}

\paragraph{Approximation algorithms for (not necessarily sparse) outerstring graphs.}
We present the first constant-factor approximation algorithm for the \textsc{Cycle Packing} running in quasi-polynomial time.
Our algorithm repeatedly computes a cycle of length at most four and removes it from $G$ until no such cycle exists.
Then the remaining graph is 2-biclique-free, and thus we can find a maximum number of cycles
in the remaining graph in $2^{O(tw\log tw)}$ time. Then we show that the total number of cycles we have found so far
is at least $\textsf{OPT}/4$. 

\begin{restatable}{corollary}{cp}\label{lem:cp}
\textsc{Cycle Packing} on outerstring graphs with $n$ vertices can be solved 
approximately with an approximation factor of 4 in $n^{O(\log\log n)}$ time. 
\end{restatable}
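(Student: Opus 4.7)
The plan is to implement the two-phase scheme foreshadowed by the surrounding text. In the first phase, as long as the current graph contains a cycle of length at most four (detectable in $n^{O(1)}$ time by brute enumeration of $3$- and $4$-tuples), greedily pick one such cycle, add it to a collection $\mathcal C_1$, and delete its vertex set. Let $G'$ denote the graph left once this process terminates. In the second phase, compute an optimal cycle packing $\mathcal C_2$ of $G'$ exactly and return $\mathcal C_1 \cup \mathcal C_2$.

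The structural fact driving the running time is that $G'$ contains no triangle and no $4$-cycle, so it cannot contain $K_{2,2}$ as a subgraph, since $K_{2,2}$ is itself a $4$-cycle. Hence $G'$ is a $2$-biclique-free outerstring graph, and Corollary \ref{cor:biclique-free} gives treewidth $O(\log n)$. Using the $2^{O(tw)}n$-time algorithm of \cite{korhonen2023single}, I would compute in polynomial time a tree decomposition of $G'$ of width $w = O(\log n)$. The standard tree-decomposition dynamic program for \textsc{Cycle Packing} runs in $w^{O(w)} n^{O(1)} = 2^{O(w\log w)} n^{O(1)}$ time, because at each bag one must track how the partial solution meets the bag as a collection of path fragments (with two endpoints each) together with the matching of these endpoints. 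Substituting $w = O(\log n)$ yields $n^{O(\log\log n)}$ time for phase two, which dominates the polynomial cost of phase one.

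For the $4$-approximation guarantee, let $\mathcal C^{\star}$ be any optimal packing of $G$, and partition it as $\mathcal C^{\star} = \mathcal A \cup \mathcal B$, where $\mathcal A$ consists of the cycles of $\mathcal C^{\star}$ meeting the vertex set $V(\mathcal C_1) := \bigcup_{C \in \mathcal C_1} V(C)$, and $\mathcal B = \mathcal C^{\star} \setminus \mathcal A$. Each cycle in $\mathcal C_1$ has at most four vertices, so $|V(\mathcal C_1)| \le 4|\mathcal C_1|$, and since cycles of $\mathcal C^{\star}$ are vertex-disjoint, each vertex of $V(\mathcal C_1)$ lies in at most one cycle of $\mathcal A$; hence $|\mathcal A| \le 4|\mathcal C_1|$. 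Every cycle of $\mathcal B$ survives entirely in $G'$, so $|\mathcal B| \le |\mathcal C_2|$. The output therefore has size $|\mathcal C_1| + |\mathcal C_2| \ge |\mathcal A|/4 + |\mathcal B| \ge |\mathcal C^{\star}|/4$. The main obstacle to keep an eye on is the treewidth dependence of \textsc{Cycle Packing}: unlike \textsc{Vertex Cover} or \textsc{Feedback Vertex Set}, the natural DP for \textsc{Cycle Packing} is only $2^{O(w \log w)} n^{O(1)}$, and it is this $\log w$ overhead that prevents us from obtaining a polynomial-time algorithm and forces the quasi-polynomial $n^{O(\log\log n)}$ bound in the statement.
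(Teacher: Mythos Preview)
Your proposal is correct and follows essentially the same approach as the paper: greedily strip out cycles of length at most four, observe that the residual graph is $K_{2,2}$-free and hence has treewidth $O(\log n)$ by Corollary~\ref{cor:biclique-free}, solve \textsc{Cycle Packing} exactly there via the $2^{O(tw\log tw)}n^{O(1)}$ dynamic program, and obtain the factor-$4$ guarantee by charging each deleted short cycle against at most four optimal cycles. Your accounting via $|V(\mathcal C_1)|\le 4|\mathcal C_1|$ is a minor rephrasing of the paper's intersection-counting argument, but the content is identical.
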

\begin{proof}
    Let $G$ be an outerstring graph with $n$ vertices. 
    As the first step, we repeatedly compute a cycle of length at most four. 
    For a cycle of length at most four, we add it to the solution and remove it from $G$. Then finally we have a graph $G_\textsf{final}$ whose shortest cycle has a length larger than four. 
    Notice that this graph does not contain $K_{2,2}$ as a subgraph, and thus its treewidth is $O(\log n)$. It is known that \textsc{Cycle Packing} can be solved in $2^{O(tw\log tw)}n^{O(1)}$ time.
    As the second step, we find a maximum-cardinality set of vertex-disjoint cycles in $G_{\textsf{final}}$ in $n^{O(\log\log n)}$ time,
    and return all cycles we computed in the first and second steps as output. 

    We can obtain at least $\textsf{OPT}/4$ vertex-disjoint cycles in this way,
    where $\textsf{OPT}$ is the maximum number of vertex-disjoint cycles. 
    Let $\mathcal C_\textsf{OPT}$ be a set of vertex-disjoint cycles with $|\mathcal C_\textsf{OPT}|=\textsf{OPT}$.
    A cycle of $\mathcal C_\textsf{OPT}$ not contained in $G_\textsf{final}$
    intersects at least one cycle we computed in the first step. 
    Since the length of a cycle we computed in the first step is at most four,
    each such cycle intersects at most four cycles of $\mathcal C_\textsf{OPT}$.
    On the other hand, the number of cycles of $\mathcal C_\textsf{OPT}$ contained in $G_\textsf{final}$ is at most the number of cycles we computed in the second step
    since we compute a maximum number of vertex-disjoint cycles of $G_\textsf{final}$
    in the second step.
    Therefore, the number of vertex-disjoint cycles we have  
    is at least $\textsf{OPT}/4$. 
\end{proof}

\bibliographystyle{plain}

\newpage
\bibliography{sample}

\newpage
\appendix

\end{document}